\documentclass[a4paper,12pt]{article}
\usepackage[margin=2.6cm]{geometry}
\usepackage{PRIMEarxiv}
\AtBeginDocument{\setlength{\headheight}{15pt}}
\usepackage{amsmath,amssymb,amsthm}
\usepackage{enumitem}
\usepackage{booktabs}
\usepackage{float}
\usepackage{url}
\usepackage[T1]{fontenc}
\usepackage[utf8]{inputenc}
\usepackage{hyphenat}
\usepackage{microtype}
\usepackage{doi}
\usepackage{orcidlink}
\usepackage{tikz}
\usepackage{pgfplots}
\pgfplotsset{compat=1.18}
\usetikzlibrary{arrows.meta,positioning,fit,calc,backgrounds}
\definecolor{selred}{RGB}{200,30,30}
\definecolor{softgray}{gray}{0.82}

\newtheorem{theorem}{Theorem}
\newtheorem{lemma}{Lemma}
\newtheorem{proposition}{Proposition}

\theoremstyle{definition}

\newtheorem{example}{Example}
\theoremstyle{remark}

\newcommand{\ones}{\mathbf{1}}
\newcommand{\meet}{\wedge}
\newcommand{\join}{\vee}
\newcommand{\AuthorORCID}{https://orcid.org/0000-0002-7342-2090} 
\newcommand{\orcidifavailable}{\if\relax\detokenize\expandafter{\AuthorORCID}\relax\else\,\orcidlink{\AuthorORCID}\fi}
\title{Polynomial-Time  Riesz-Energy Subset Selection for Ordered Point Sets on Lines and $\ell_1$-Staircases}
\author{Michael T. M. Emmerich\orcidifavailable\\Faculty of Information Technology, University of Jyv\"askyl\"a, Finland}
\date{\today}

\begin{document}
\maketitle

\begin{abstract}
We study efficient algorithms for one-dimensional fixed-cardinality minimum Riesz $s$-energy subset selection on ordered real-line point sets and propose and test a polynomial-time exact s-t cut–based algorithm for this problem. Given $x_1<\cdots<x_n$, an exponent $s>0$, and a cardinality $k$, the task is to choose $1\leq i_1<\cdots<i_k\leq n$ minimizing
$E_s(i_1,\ldots,i_k)=\sum_{1\leq p<q\leq k}(x_{i_q}-x_{i_p})^{-s}$.
We prove that the one-dimensional Riesz interaction satisfies a Monge inequality. When feasible subsets are encoded as increasing index vectors, this property implies submodularity on a finite distributive lattice and yields polynomial-time solvability by submodular minimization over such lattices. The structural reduction holds for every real $s>0$. We also derive an explicit minimum $S$--$T$ cut formulation with $k(n-k)$ threshold variables and $O(k^2(n-k)^2)$ finite pairwise edges. The constructed graph has $N=k(n-k)$ nodes and $M=O(k^2(n-k)^2)$ arcs after an $O(k^2(n-k)^2)$ coefficient-construction step; an $O(NM)$ max-flow bound gives an $O(k^3(n-k)^3)$ cut step, while the conservative $O(N^2M)$ bound gives $O(k^4(n-k)^4)$. By an isometry argument, the same algorithm applies to $\ell_1$-staircases, including monotone two-dimensional Pareto-front and skyline approximations. The accompanying Python implementation includes verification examples and an empirical runtime benchmark; on balanced instances $n=2k$, the reference min-cut code overtakes exhaustive enumeration around $n=24$--$26$. The appendix provides examples and detailed explanations of the underlying theory.
\end{abstract}
\keywords{Riesz energy \and fixed-cardinality subset selection \and polynomial-time algorithms \and minimum cut \and submodular optimization \and ordered point sets \and multiobjective optimization \and  $\ell_1$-staircases}
\section{Introduction}

The Riesz $s$-energy of a finite point configuration is a classical repulsive
potential: pairs of points contribute more when they are close and less when
they are far apart. It has many applications, ranging from understanding particle distributions in physics
to guiding the search for evenly distributed approximation sets in set-oriented optimization 
and the design of experiments. 
In the fixed-cardinality subset-selection problem studied
here, one is given ordered points $x_1<x_2<\cdots<x_n$ on the real line and
must choose exactly $k$ of them so that the total interaction energy is as small
as possible.  Thus, for an increasing index vector $I=(i_1,\ldots,i_k)$ with
$1\leq i_1<\cdots<i_k\leq n$, the objective is
$E_s(I)=\sum_{1\leq p<q\leq k}(x_{i_q}-x_{i_p})^{-s}$, where $s>0$.  For
$k=0$, $k=1$, and $k=n$ the problem is trivial; throughout the genuinely
nontrivial part of this report we assume $2\leq k<n$.

\begin{example}[Ten equally spaced points]
Consider the ordered points $x_i=i-1$, $i=1,
\ldots,10$, with $k=4$ and $s=1$.  The optimum subset is
$S^*=\{x_1,x_4,x_7,x_{10}\}$, corresponding to the four positions
$0,3,6,9$.  The display in Figure \ref{fig:rieszex} shows the same solution in two ways: as a
subset of the ordered line and as the complete interaction graph of the selected
points, where edge widths are proportional to the Riesz weights.

\begin{figure}
    \caption{Illustration of the Riesz $s$ Energy subset selection problem for $k=4$ and $n=10$.}
    \label{fig:rieszex}
\begin{center}
\begin{tikzpicture}[font=\scriptsize,>=Latex]
\tikzset{
  introSel/.style={circle, draw=selred, fill=selred, inner sep=0pt, minimum size=5.4pt},
  introUnsel/.style={circle, draw=gray!70, fill=gray!20, inner sep=0pt, minimum size=5.4pt},
  introGuide/.style={gray!50, dashed, thin},
  wone/.style={line width=1.9pt},
  wtwo/.style={line width=1.15pt},
  wthree/.style={line width=0.55pt}
}

\begin{scope}[x=0.47cm,y=0.47cm]
\node[font=\bfseries\small] at (4.5,5.7) {A. Ordered points on the line};
\draw[thick,-{Latex[length=2mm]}] (-0.6,3.6) -- (9.8,3.6) node[right] {$\mathbb{R}$};
\foreach \i/\x in {1/0,2/1,3/2,4/3,5/4,6/5,7/6,8/7,9/8,10/9} {
  \draw[introGuide] (\x,3.6) -- (\x,2.55);
  \node at (\x,2.25) {\x};
}
\foreach \i/\x in {1/0,2/1,3/2,4/3,5/4,6/5,7/6,8/7,9/8,10/9} {
  \ifnum\i=1
    \node[introSel] at (\x,3.6) {};
    \node[text=selred] at (\x,4.85) {1};
  \else\ifnum\i=4
    \node[introSel] at (\x,3.6) {};
    \node[text=selred] at (\x,4.85) {4};
  \else\ifnum\i=7
    \node[introSel] at (\x,3.6) {};
    \node[text=selred] at (\x,4.85) {7};
  \else\ifnum\i=10
    \node[introSel] at (\x,3.6) {};
    \node[text=selred] at (\x,4.85) {10};
  \else
    \node[introUnsel] at (\x,3.6) {};
  \fi\fi\fi\fi
  \node at (\x,4.25) {$x_{\i}$};
}
\node[align=center] at (4.5,1.15) {Example: $n=10$, $k=4$, $s=1$\\[1mm]
Chosen optimum subset $S^*=\{\textcolor{selred}{x_1},\textcolor{selred}{x_4},\textcolor{selred}{x_7},\textcolor{selred}{x_{10}}\}$};
\end{scope}

\begin{scope}[shift={(7.7,0.00)},x=0.53cm,y=0.53cm]
\node[font=\bfseries\small] at (0,6.70) {B. Interaction graph of $S^*$};
\coordinate (T) at (0,5.10);
\coordinate (R) at (2.55,2.55);
\coordinate (B) at (0,0.00);
\coordinate (L) at (-2.55,2.55);

\draw[wthree] (L) -- (T) node[midway, above left=1pt] {$1/9$};
\draw[wone]   (T) -- (R) node[midway, above right=1pt] {$1/3$};
\draw[wtwo]   (T) -- (B) node[midway, above right= 1pt] {$1/6$};
\draw[wtwo]   (L) -- (R) node[midway, below =1pt] {$1/6$};
\draw[wone]   (L) -- (B) node[midway, below left=1pt] {$1/3$};
\draw[wone]   (B) -- (R) node[midway, below right=1pt] {$1/3$};

\node[introSel] at (T) {};
\node[introSel] at (R) {};
\node[introSel] at (B) {};
\node[introSel] at (L) {};
\node[above=2pt] at (T) {$x_1\,(0)$};
\node[right=4pt] at (R) {$x_4\,(3)$};
\node[below=2pt] at (B) {$x_7\,(6)$};
\node[left=4pt] at (L) {$x_{10}\,(9)$};
\end{scope}
\end{tikzpicture}
\end{center}
\begin{equation*}
E_1(S^*)
 = \sum_{i<j\in S^*}|x_j-x_i|^{-1}
 = \frac13+\frac16+\frac19+\frac13+\frac16+\frac13
 = \frac{13}{9}.
\end{equation*}
\end{figure}
\end{example}

More formally, the input to the one-dimensional fixed-cardinality minimum
Riesz $s$-energy subset problem consists of rationally encoded real numbers
$x_1<x_2<\cdots<x_n$, a cardinality parameter $k\in\{0,\ldots,n\}$, and a
positive parameter $s$.  A feasible solution is a $k$-element subset of the
input points, equivalently an increasing index vector $I=(i_1,\ldots,i_k)$.
The task is to find a feasible vector minimizing $E_s(I)$.  The structural
arguments in this paper hold for every real $s>0$.  When bit-complexity is
considered, the coordinates are assumed to be rationally encoded and the
arithmetic assumptions on $s$ are those stated in the complexity section.

The guiding research question is whether this one-dimensional problem admits an
exact polynomial-time algorithm for arbitrary $s>0$ in the corresponding
arithmetic model.  Equivalently, we ask whether the open one-dimensional case
can be reduced to a polynomial-size combinatorial optimization problem whose
global optimum is precisely a minimum-energy subset.  In addition to the
proofs, the paper is accompanied by a working Python implementation of the
resulting cut algorithm, including the numerical examples and brute-force checks
used for verification.  The same implementation is used below for an empirical
runtime comparison against complete enumeration on balanced instances $n=2k$.

The paper first places the problem in the context of Riesz energy, diversity
indicators, and known hardness results.  It then shifts from ordinary subsets
to increasing index vectors and studies the distributive lattice induced by
componentwise order.  The central structural step is a Monge inequality for the
one-dimensional Riesz interaction, which yields submodularity of the objective
on this lattice.  This structure is then made algorithmic by introducing
threshold variables and constructing an explicit minimum $S$--$T$ cut instance.
The subsequent sections give the algorithm, prove its correctness, analyze its
complexity, and include small verification and runtime-efficiency examples.  The
paper concludes by relating the construction to general submodular minimization
and by discussing extensions and applications.

\section{Related work}

Riesz $s$-energy is a classical concept from potential theory; its
minimization leads to uniformly, or near-uniformly, dispersed particle
configurations on potentially nonlinear manifolds \cite{HardinSaff2005}.
It has recently also been studied as a diversity indicator in evolutionary
multiobjective optimization.  Falc\'on-Cardona, Uribe, and Rosas explicitly
proposed Riesz $s$-energy as such a diversity indicator and established the
corresponding supermodularity property of the set function~\cite{FalconCardona2024}.
Pereverdieva et al.\ subsequently discuss theoretical, computational, and
practical properties of Riesz $s$-energy and related diversity indicators,
including monotonicity-type properties, submodularity/supermodularity aspects,
and the NP-hardness of the corresponding subset-selection problem in general
metric spaces~\cite{Pereverdieva2024}.

The algorithmic route used in the present paper belongs to the literature on
submodular function minimization.  For ordinary set functions, strongly
polynomial-time algorithms were obtained independently by Schrijver and by
Iwata, Fleischer, and Fujishige~\cite{SchrijverSFM2000,IwataFleischerFujishige2001};
Orlin later gave a faster strongly polynomial-time algorithm~\cite{Orlin2009}.
For functions defined on a finite distributive lattice, Birkhoff's
representation theorem identifies the lattice with the ideals of its poset of
join-irreducibles, reducing submodular minimization on the lattice to
submodular minimization over a ring family of sets. Kolmogorov and Zabih \cite{KolmogorovZabih2004} established the link between S-T cuts and the optimization of submodular energy functions in computer graphics. A close algorithmic
precursor for the present route is also the work of de Berg, L\'opez Mart\'inez,
and Spieksma on finding diverse minimum $s$--$t$ cuts~\cite{DeBergDiverseCuts2024}.  They show that the pairwise-sum and coverage versions of $k$-diverse minimum
$s$--$t$ cuts can be solved in strongly polynomial time via submodular
minimization on a distributive lattice of ordered collections of minimum
$s$--$t$ cuts.  The later distributive-lattice
framework of the same authors abstracts this idea to broader classes of
combinatorial problems whose feasible solutions form a distributive lattice~\cite{DeBerg2025}.
The link in the present manuscript from the lattice/submodular function minimization viewpoint to a concrete
minimum-cut direction is due to this line of work (\cite{SD2026,DeBergDiverseCuts2024}).

The computational complexity of Riesz $s$-energy subset selection in
geometric settings was studied further by Emmerich, Pereverdieva, and
Deutz~\cite{Emmerich2026}, including a comparison with minimum pairwise
distance subset selection.  They showed that the problem is already
NP-hard in the Euclidean plane, with fixed dimension $2$, when $s$ is part
of the input.  The one-dimensional case, however, was left open.  An
earlier dynamic-programming approach for ordered point sets investigated
left--right recurrences for this setting~\cite{Emmerich2025DP}; this line
of work showed that the straightforward left--right dynamic-programming
strategy is not exact in the one-dimensional case.  The present note
addresses the remaining one-dimensional complexity question by exploiting
the lattice structure of increasing index vectors.

\section{The lattice of feasible index vectors}

Encode a feasible subset by its increasing index vector
\[
        I=(i_1,\ldots,i_k),
        \qquad
        1\leq i_1<\cdots<i_k\leq n.
\]
The set of such vectors is partially ordered by componentwise comparison:
\[
        I\leq J
        \quad\Longleftrightarrow\quad
        i_r\leq j_r \quad \text{for all } r=1,\ldots,k.
\]
The meet and join are
\[
        (I\meet J)_r=\min(i_r,j_r),
        \qquad
        (I\join J)_r=\max(i_r,j_r).
\]
These operations preserve strict increase, hence the feasible vectors form a finite distributive lattice; see, for example, standard treatments of finite distributive lattices and order ideals in \cite{Birkhoff1937,DaveyPriestley2002}.

It is convenient to remove the strict inequalities by writing $i_r=r+y_r$ for $r=1,\ldots,k$.  Here $y_r=i_r-r$ is not a coordinate of an input point, but the offset of the $r$-th chosen index from its smallest possible value $r$.  For example, if $n=7$, $k=3$, and the selected index vector is $I=(1,4,6)$, then $y_1=1-1=0$, $y_2=4-2=2$, and $y_3=6-3=3$, so $y=(0,2,3)$; the selected input points are still $x_1,x_4,x_6$.  With this notation, the strict inequalities $1\leq i_1<\cdots<i_k\leq n$ become the weak inequalities $0\leq y_1\leq y_2\leq\cdots\leq y_k\leq m$, where $m=n-k$.

It is useful to think of this transformation as subtracting the minimum staircase $(1,2,\ldots,k)$ from every feasible index vector.  Figure~\ref{fig:staircase-y-vectors} illustrates this viewpoint for two example index vectors.

\begin{figure}[htbp]
\centering
\begin{tikzpicture}[x=0.78cm,y=0.48cm,>=Latex,font=\small]
  \draw[step=1,gray!35,thin] (0,0) grid (10,10);
  \draw[->,thick] (0,0) -- (10.4,0) node[below] {$r$};
  \draw[->,thick] (0,0) -- (0,10.4) node[left] {index value};
  \foreach \x in {0,...,10} {
    \node[below] at (\x,0) {\scriptsize \x};
    \node[left] at (0,\x) {\scriptsize \x};
  }

  \draw[very thick,black]
    (0,0)--(1,0)--(1,1)--(2,1)--(2,2)--(3,2)--(3,3)--(4,3)--(4,4)--(5,4)--(5,5);
  \foreach \r/\v in {1/1,2/2,3/3,4/4,5/5}
    \fill[black] (\r,\v) circle (2pt);
  \node[black,anchor=west] at (5.15,5) {minimum staircase $(1,2,3,4,5)$};

  \draw[very thick,blue!70!black]
    (0,0)--(1,0)--(1,1)--(2,1)--(2,4)--(3,4)--(3,6)--(4,6)--(4,8)--(5,8)--(5,10);
  \foreach \r/\v in {1/1,2/4,3/6,4/8,5/10}
    \fill[blue!70!black] (\r,\v) circle (2.2pt);
  \node[blue!70!black,anchor=west] at (5.15,9.6) {$I=(1,4,6,8,10)$};

  \draw[very thick,red!75!black]
    (0,0)--(1,0)--(1,2)--(2,2)--(2,3)--(3,3)--(3,7)--(4,7)--(4,9)--(5,9)--(5,10);
  \foreach \r/\v in {1/2,2/3,3/7,4/9,5/10}
    \fill[red!75!black] (\r,\v) circle (2.2pt);
  \node[red!75!black,anchor=west] at (5.15,8.5) {$J=(2,3,7,9,10)$};

  \foreach \r/\minv/\iv/\yv in {1/1/1/0,2/2/4/2,3/3/6/3,4/4/8/4,5/5/10/5} {
    \draw[<->,blue!70!black,thick] (\r-0.18,\minv) -- (\r-0.18,\iv)
      node[midway,left=1pt,fill=white,inner sep=1pt] {\scriptsize $\yv$};
  }
  \foreach \r/\minv/\jv/\yv in {1/1/2/1,2/2/3/1,3/3/7/4,4/4/9/5,5/5/10/5} {
    \draw[<->,red!75!black,thick] (\r+0.18,\minv) -- (\r+0.18,\jv)
      node[midway,right=1pt,fill=white,inner sep=1pt] {\scriptsize $\yv$};
  }

  \node[blue!70!black,anchor=west] at (7.0,3.0) {$y^I_r=i_r-r$};
  \node[red!75!black,anchor=west] at (7.0,2.1) {$y^J_r=j_r-r$};
\end{tikzpicture}
\caption{Three staircases on the integer grid.  The black staircase is the minimum staircase $(1,2,3,4,5)$.  The blue staircase corresponds to the subset indices $I=(1,4,6,8,10)$ and the red staircase to $J=(2,3,7,9,10)$.  The vertical double arrows show the coordinatewise differences from the minimum staircase, giving $y^I=(0,2,3,4,5)$ and $y^J=(1,1,4,5,5)$, where $y^I_r=i_r-r$ and $y^J_r=j_r-r$.}
\label{fig:staircase-y-vectors}
\end{figure}
Thus the problem is equivalent to minimizing
\[
        F(y_1,\ldots,y_k)
        =\sum_{1\leq p<q\leq k}
        \frac{1}{(x_{q+y_q}-x_{p+y_p})^s}
\]
over all nondecreasing integer vectors $y\in\{0,\ldots,m\}^k$.

\section{The Monge inequality}
The algorithm rests on the following elementary Monge property of the Riesz $s$-Energy in the 1-D case.

\begin{lemma}[One-dimensional Riesz interaction is Monge]
Let $s>0$ and $h(t)=t^{-s}$.  If
\[
        A<B<C<D,
\]
then
\[
        h(D-B)+h(C-A)\leq h(D-A)+h(C-B).
\]
\end{lemma}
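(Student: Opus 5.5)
The plan is to reduce the claim to the convexity of $h(t)=t^{-s}$ on $(0,\infty)$. Set $a=C-B>0$, $u=B-A>0$ and $v=D-C>0$. Then the four distances in the statement become
\[
        C-B=a,\qquad C-A=a+u,\qquad D-B=a+v,\qquad D-A=a+u+v .
\]
In this notation the claimed inequality reads
\[
        h(a+u)-h(a)\leq h(a+u+v)-h(a+v),
\]
which says that the increment of $h$ over an interval of length $u$ does not decrease when the interval is shifted to the right by $v$.

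First I note that $h'(t)=-s\,t^{-s-1}$ and $h''(t)=s(s+1)t^{-s-2}>0$ for $t>0$, so $h$ is convex on $(0,\infty)$ for every real $s>0$. Next I write both sides as integrals of $h'$:
\[
        h(a+u)-h(a)=\int_0^u h'(a+t)\,dt,\qquad h(a+u+v)-h(a+v)=\int_0^u h'(a+v+t)\,dt .
\]
Since $h'$ is nondecreasing and $v>0$, we have $h'(a+t)\leq h'(a+v+t)$ pointwise for $t\in[0,u]$. Integrating this pointwise inequality gives the claim. In fact the inequality is strict, because $h'$ is strictly increasing.

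There is also a derivative-free route, which is the standard fact that convexity makes slopes of secants nondecreasing. The points $a<a+u<a+u+v$ and $a<a+v<a+u+v$ split the interval $[a,a+u+v]$. Equivalently, $a+u$ and $a+v$ have the same sum as the endpoints $a$ and $a+u+v$, and both lie between them. Writing each middle point as a convex combination of the endpoints and adding the two Jensen inequalities gives $h(a+u)+h(a+v)\leq h(a)+h(a+u+v)$, with weights that sum correctly.

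I do not expect a genuine obstacle. The only point needing care is the bookkeeping: identifying which two distances are the ``inner'' pair and which are the ``outer'' pair, so that the substitution correctly turns the Monge inequality into the shifted-increment form. I would double-check this assignment before writing anything else.
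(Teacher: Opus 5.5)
Your proof is correct and is essentially the paper's argument: both use the gap substitution to reduce the claim to comparing the change of $h$ over two intervals of equal length, one shifted to the right, and conclude from $h''>0$ (so $h'$ is increasing) via the integral of $h'$; the paper integrates over an interval of length $D-C$ shifted by $B-A$, while you integrate over length $B-A$ shifted by $D-C$, which only interchanges the roles of the two outer gaps. Your Jensen-based alternative is also valid, and it needs only convexity of $h$ rather than differentiability.
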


\begin{proof}
Write
\[
        a=B-A,\qquad b=C-B,\qquad c=D-C,
\]
with $a,b,c>0$.  The desired inequality becomes
\[
        h(b+c)+h(a+b)
        \leq
        h(a+b+c)+h(b).
\]
Equivalently,
\[
        h(b)-h(b+c)
        \geq
        h(a+b)-h(a+b+c).
\]
Since
\[
        h'(t)=-s t^{-s-1}<0,
        \qquad
        h''(t)=s(s+1)t^{-s-2}>0,
\]
we have that $-h'(t)$ is decreasing in $t$.  Therefore the integral drop of $h$ over an interval of length $c$ decreases as the interval moves to the right:
\[
        h(u)-h(u+c)
        =\int_u^{u+c} -h'(t)\,dt.
\]
Taking $u=b$ and $u=a+b$ proves the claim.
\end{proof}

\begin{proposition}[Submodularity on the fixed-cardinality lattice]
The function $F$ is submodular on the lattice of feasible vectors.  That is, for all feasible $y,z$,
\[
        F(y)+F(z)
        \geq
        F(y\meet z)+F(y\join z),
\]
where meet and join are taken componentwise.
\end{proposition}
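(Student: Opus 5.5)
The plan is to reduce the statement to a pairwise inequality. $F$ is a sum over pairs $p<q$ of terms $f_{pq}(y_p,y_q)$ that depend only on two coordinates, and meet and join act componentwise. So it suffices to show that each term is submodular in its two arguments, i.e.\ for every pair $p<q$ and all feasible $y,z$,
\[
f_{pq}(y_p,y_q)+f_{pq}(z_p,z_q)\geq f_{pq}(\min(y_p,z_p),\min(y_q,z_q))+f_{pq}(\max(y_p,z_p),\max(y_q,z_q)).
\]
Summing over $p<q$ then gives the proposition. Since $y\meet z$ and $y\join z$ are again nondecreasing vectors in $\{0,\ldots,m\}^k$, every evaluation of $F$ involved is on a feasible vector.

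Fix $p<q$ and write the indices as $i=p+y_p$, $j=q+y_q$ for $y$ and $i'=p+z_p$, $j'=q+z_q$ for $z$. Since $y_p\leq y_q$ gives $i=p+y_p\leq p+y_q<q+y_q=j$, feasibility yields $i<j$ and $i'<j'$. The meet uses indices $(\min(i,i'),\min(j,j'))$ and the join uses $(\max(i,i'),\max(j,j'))$.

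If $y_p\leq z_p$ and $y_q\leq z_q$, or both reverse inequalities hold, the meet and join terms are just the two original terms, and equality holds. The remaining case is the crossing one; by symmetry take $y_p<z_p$ and $y_q>z_q$, so $i<i'$ and $j'<j$. The left side is then $h(x_j-x_i)+h(x_{j'}-x_{i'})$, and the right side is $h(x_{j'}-x_i)+h(x_j-x_{i'})$.

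The main obstacle is matching this crossing case to the Lemma, which requires $A<B<C<D$, whereas here we only know $i<i'$, $j'<j$, $i<j'$ and $i'<j$.

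\textbf{Sub-case $i'<j'$} (already known from feasibility of $z$). Then $i<i'<j'<j$. Set $A=x_i$, $B=x_{i'}$, $C=x_{j'}$, $D=x_j$. The Lemma gives
\[
h(D-B)+h(C-A)\leq h(D-A)+h(C-B),
\]
that is,
\[
h(x_j-x_{i'})+h(x_{j'}-x_i)\leq h(x_j-x_i)+h(x_{j'}-x_{i'}).
\]
This is exactly the required inequality.

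\textbf{Degenerate equalities.} Some indices could coincide; I check that they cannot. We have $i\neq i'$ and $j\neq j'$ by the strict case assumption. Also $i<i'<j'<j$, so $i'\neq j'$, and every other coincidence is excluded by strict ordering. Since the $x$'s are strictly increasing, the Lemma applies.

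It remains to observe that the crossing case always forces the nested ordering $i<i'<j'<j$. This is automatic, because $i'<j'$ holds for any feasible $z$. The nestedness is what makes the Monge inequality, rather than its reverse, the relevant one.
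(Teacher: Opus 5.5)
Your proof is correct and follows the paper's own argument. You decompose $F$ into the pair terms, note equality in the comparable case, and in the crossing case apply the Monge lemma to the nested indices $p+y_p<p+z_p<q+z_q<q+y_q$. One small improvement is that your case split makes the crossing inequalities strict, so the four points are automatically distinct. You therefore do not need the paper's continuity/perturbation remark for $A=B$ or $C=D$.
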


\begin{proof}
First isolate one pair of selected ranks $p<q$.  The contribution of this pair
is written as
\[
        V_{pq}(a,b)
        =\frac{1}{(x_{q+b}-x_{p+a})^s},
        \qquad 0\leq a\leq b\leq m.
\]
Here $a$ and $b$ are possible offset values for $y_p$ and $y_q$, respectively.
Thus, if $a=y_p$ and $b=y_q$, the actual selected indices in this pair are
$p+y_p$ and $q+y_q$, and the pair contributes
$(x_{q+y_q}-x_{p+y_p})^{-s}=V_{pq}(y_p,y_q)$.  For example, if $p=2$, $q=5$,
$y_2=3$, and $y_5=4$, then the selected indices in this pair are $2+3=5$ and
$5+4=9$, so the corresponding pair contribution is
$V_{2,5}(3,4)=(x_9-x_5)^{-s}$.

The reason it is enough to prove submodularity for these pair terms is that the
Riesz objective is exactly a sum over pairs:
\[
        F(y)=\sum_{1\leq p<q\leq k} V_{pq}(y_p,y_q).
\]
There are no triple terms or other global terms.  Since sums of submodular
functions are submodular, it is sufficient to show that every summand
$V_{pq}(y_p,y_q)$ satisfies the lattice submodularity inequality with respect
to the two coordinates on which it depends.

Fix $p<q$ and take two feasible two-coordinate arguments $(a,b)$ and $(c,d)$,
where $a\leq b$ and $c\leq d$.  These are the restrictions of two feasible
vectors to coordinates $p$ and $q$.  If the two arguments are ordered in the
same componentwise direction, for instance $a\leq c$ and $b\leq d$, then their
meet and join are simply $(a,b)$ and $(c,d)$, so the submodularity inequality is
an equality.  The same applies when $c\leq a$ and $d\leq b$.

Thus the only nontrivial case is the crossing case: one argument has the
smaller first coordinate but the larger second coordinate.  After possibly
exchanging the two arguments, we may assume $a\leq c$ and $d\leq b$.  Together
with feasibility of $(c,d)$, this gives the ordered chain
\[
        a\leq c\leq d\leq b.
\]
The meet and join of the two arguments are then
\[
        (a,b)\meet(c,d)=(a,d),
        \qquad
        (a,b)\join(c,d)=(c,b).
\]
Therefore the desired two-coordinate submodularity inequality is
\[
        V_{pq}(a,b)+V_{pq}(c,d)
        \geq
        V_{pq}(a,d)+V_{pq}(c,b).
\]
Substituting the definition of $V_{pq}$, this becomes
\[
        (x_{q+b}-x_{p+a})^{-s}
        +(x_{q+d}-x_{p+c})^{-s}
        \geq
        (x_{q+d}-x_{p+a})^{-s}
        +(x_{q+b}-x_{p+c})^{-s}.
\]
Now set
\[
        A=x_{p+a},\qquad B=x_{p+c},\qquad
        C=x_{q+d},\qquad D=x_{q+b}.
\]
Because $p<q$, $a\leq c\leq d\leq b$, and the input points are strictly ordered,
these four points satisfy $A\leq B<C\leq D$; the strict middle inequality follows
from $p+c<q+d$.  If $A=B$ or $C=D$, the desired inequality follows by continuity
from an arbitrarily small perturbation of equal points.  In the strict case,
the Monge lemma applied to $A<B<C<D$ gives
\[
        h(D-B)+h(C-A)\leq h(D-A)+h(C-B),
        \qquad h(t)=t^{-s}.
\]
Rearranging this inequality is exactly the displayed inequality for
$V_{pq}$. Hence, each pairwise term is submodular, and therefore, the full sum
$F$ is submodular (the idea for this proof originated in an online discussion \cite{SD2026}, see acknowledgement).
\end{proof}

This already implies a polynomial-time algorithm through submodular minimization over finite distributive lattices, or equivalently over the associated ring family of order ideals \cite{Birkhoff1937,Topkis1998,Fujishige2005,Schrijver2003,SchrijverSFM2000,IwataFleischerFujishige2001,Orlin2009,DeBerg2025}.  The next section gives a more explicit min-cut construction for this special problem.  This move from a distributive-lattice formulation toward an $S$--$T$ cut formulation is motivated by de Berg, L\'opez Mart\'inez, and Spieksma's  study of diverse minimum $s$--$t$ cuts, where ordered collections of minimum cuts are connected to distributive lattices and the resulting objectives are handled through submodular minimization~\cite{DeBergDiverseCuts2024}.

\section{Threshold variables, triangular expansion, and an explicit min-cut formulation}

The following threshold encoding is the point where the distributive-lattice
argument becomes a concrete graph-cut construction.  For each rank
$r=1,\ldots,k$ and threshold $t=1,\ldots,m$, introduce a binary variable
\[
        z_{r,t}=\ones[y_r\geq t].
\]
For a feasible vector $y$, let
\[
        Z(y)=\{(r,t):1\leq r\leq k,\;1\leq t\leq y_r\}.
\]
Then $Z(y)$ is closed under the implications
\[
        (r,t+1)\in Z(y) \Rightarrow (r,t)\in Z(y),
        \qquad
        (r,t)\in Z(y) \Rightarrow (r+1,t)\in Z(y).
\]
Equivalently, $Z(y)$ is an order ideal of the finite poset generated by
\[
        (r,t)\preceq (r,t+1),
        \qquad
        (r+1,t)\preceq (r,t).
\]
Conversely, every order ideal satisfying these closure relations determines a
unique feasible vector by
\[
        y_r=\sum_{t=1}^m z_{r,t}.
\]
Thus the feasible vectors are exactly the order ideals of this poset, ordered
by inclusion.  This is the explicit Birkhoff-type representation of the
finite distributive lattice of nondecreasing vectors
$0\leq y_1\leq\cdots\leq y_k\leq m$ \cite{Birkhoff1937,DaveyPriestley2002}.

In an $S$--$T$ cut, the source side represents variables with value $1$.  This is the standard maximum-closure/minimum-cut mechanism \cite{Picard1976}.  The
closure constraints are imposed by large-capacity arcs
\[
        (r,t+1)\to (r,t)
        \quad\text{and}\quad
        (r,t)\to(r+1,t).
\]
Throughout the algorithm we use a finite large capacity $U$.  It is enough to
choose $U$ larger than the total capacity of all finite source/sink and
interaction arcs, so that no minimum cut can profitably violate a closure
constraint.

The next lemma is the triangular threshold expansion needed for the pairwise
Riesz terms.  It avoids extending the pairwise cost from its natural triangular
domain $0\leq a\leq b\leq m$ to a full rectangle.

\begin{lemma}[Triangular threshold expansion]
Let
\[
        f:\{(a,b):0\leq a\leq b\leq m\}\to\mathbb{R}
\]
be arbitrary.  For $1\leq t<u\leq m$, define
\[
\begin{split}
        \Delta_{t,u}={}&
        f(t,u)-f(t-1,u)-f(t,u-1)+f(t-1,u-1),
\end{split}
\]
and define the unary coefficients by
\[
        C=f(0,0),
        \qquad
        \beta_u=f(0,u)-f(0,u-1),
        \quad u=1,\ldots,m,
\]
\[
        \alpha_a=f(a,a)-f(a-1,a-1)-\beta_a
        -\sum_{t=1}^{a-1}\Delta_{t,a},
        \quad a=1,\ldots,m.
\]
Then, for every $0\leq a\leq b\leq m$,
\[
\begin{split}
        f(a,b)={}&C+
        \sum_{t=1}^{a}\alpha_t+
        \sum_{u=1}^{b}\beta_u+
        \sum_{\substack{1\leq t\leq a\\1\leq u\leq b\\t<u}}
        \Delta_{t,u}.
\end{split}
\]
Equivalently, with threshold variables
$z^a_t=\ones[a\geq t]$ and $z^b_u=\ones[b\geq u]$,
\[
        f(a,b)=C+
        \sum_{t=1}^{m}\alpha_t z^a_t+
        \sum_{u=1}^{m}\beta_u z^b_u+
        \sum_{1\leq t<u\leq m}\Delta_{t,u} z^a_t z^b_u.
\]
\end{lemma}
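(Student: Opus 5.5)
Let $G(a,b)$ denote the right-hand side of the expansion, defined on the triangle $0\le a\le b\le m$. The plan is to show $G=f$ on the triangle by induction, walking along a path that starts at $(0,0)$, climbs the left edge $a=0$, steps up the diagonal, and fills each row. The equivalence of the two displayed forms is immediate: $z^a_t=\ones[a\ge t]$ and $z^b_u=\ones[b\ge u]$ turn the indicator sums into the truncated sums. So everything reduces to checking the first identity.

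The key is to compute the discrete increments of $G$, read off directly from its definition.
\begin{itemize}
\item For $b\ge1$ and $a\le b-1$,
\[
G(a,b)-G(a,b-1)=\beta_b+\sum_{t=1}^{\min(a,b-1)}\Delta_{t,b}.
\]
\item For $1\le a\le b$,
\[
G(a,b)-G(a-1,b)=\alpha_a+\sum_{u=a+1}^{b}\Delta_{a,u}.
\]
\end{itemize}
The steps then proceed in order.

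\emph{Base case and left edge.} First, $G(0,0)=C=f(0,0)$. Along $a=0$ the increments are just $\beta_b$, and these telescope to $f(0,b)$ by the definition of $\beta$.

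\emph{Diagonal.} Next I verify $G(a,a)=f(a,a)$ by induction on $a$, using
\[
G(a,a)-G(a-1,a-1)=\alpha_a+\beta_a+\sum_{t=1}^{a-1}\Delta_{t,a}.
\]
This is exactly $f(a,a)-f(a-1,a-1)$ by the definition of $\alpha_a$. Only pair terms with $t\le a-1$, $u=a$ are new here, since $t<u$ excludes $t=a$.

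\emph{Interior points.} For $a<b$, I use a second induction on $b-a$, or simply induction over $(a,b)$ in lexicographic order with $a\ge1$. The mixed second difference of $G$ at $(a,b)$ with $1\le a<b$ is
\[
G(a,b)-G(a-1,b)-G(a,b-1)+G(a-1,b-1)=\Delta_{a,b}.
\]
This holds because among all the sums only the pair term $(t,u)=(a,b)$ distinguishes the four corners. The unary parts cancel. All four points lie in the triangle, since $a-1\le a\le b-1$.

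By definition $\Delta_{a,b}$ is the same mixed difference of $f$. Knowing $G=f$ on $(a-1,b)$, $(a,b-1)$ and $(a-1,b-1)$ — the left edge and the diagonal supply the boundary, and earlier steps supply the rest — forces $G(a,b)=f(a,b)$. Explicitly, I fix $b$ and induct on $a$ from $0$ up to $b-1$. At $(a,b)$ the three other corners have either smaller $a$, or the same $a$ and smaller $b$ with $a\le b-1$. These are covered if the outer induction is on $b$ and the inner one on $a$.

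The main obstacle is the bookkeeping at the diagonal boundary. The square $(a-1,b-1),(a,b)$ leaves the triangle exactly when $a=b$, which is why the diagonal needs its own step and its own coefficient $\alpha$. I would check carefully that the truncations $t<u$, $t\le a$ and $u\le b$ produce precisely the sums claimed in the diagonal increment and in the mixed difference, with no spurious $\Delta_{t,t}$ terms.
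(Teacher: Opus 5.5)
Your proof is correct. The first two stages coincide with the paper's: the left edge $a=0$ by telescoping the $\beta_u$, and the diagonal by induction using the definition of $\alpha_a$. The route differs only in the off-diagonal step.

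\textbf{The paper's off-diagonal step.} It uses exactly the column increment from your first bullet, $R(a,b)-R(a,b-1)=\beta_b+\sum_{t=1}^{a}\Delta_{t,b}$ for $b>a$. It then observes that the sum over $t$ telescopes, because $\Delta_{t,b}$ is the $t$-difference of $f(t,b)-f(t,b-1)$. This leaves $\bigl[f(a,b)-f(a,b-1)\bigr]-\bigl[f(0,b)-f(0,b-1)\bigr]$, and adding $\beta_b$ gives $f(a,b)-f(a,b-1)$. A one-dimensional induction on $b$, starting from the already verified diagonal value at $(a,a)$, finishes the proof.

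\textbf{Your off-diagonal step.} You match the mixed second differences of $G$ and $f$ square by square. You then run a two-dimensional induction (outer on $b$, inner on $a$), with the left edge and the diagonal as boundary data.

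\textbf{Comparison.} The two arguments are equivalent: summing your square identities over $t=1,\dots,a$ at fixed $b$ yields the paper's column identity. The paper's version is shorter and needs no care about induction order; the left edge is just its $a=0$ instance. Yours makes the structure explicit. A function on the triangle is determined by its left edge, its diagonal, and its interior mixed differences. The squares that would straddle the diagonal leave the triangle, which is exactly why the separate coefficient $\alpha$ is needed.

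\textbf{One small slip.} The alternative you mention, induction on $b-a$ alone, would not work, because the corner $(a-1,b)$ has gap $b-a+1$, larger than that of $(a,b)$. The order you finally commit to is valid: outer on $b$ and inner on $a$, or lexicographic with $a$ as the major key.
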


\begin{proof}
Let $R(a,b)$ denote the right-hand side of the displayed expansion.  For
$a=0$, the formula reduces to
\[
        R(0,b)=f(0,0)+\sum_{u=1}^b(f(0,u)-f(0,u-1))=f(0,b).
\]
For diagonal points, the definition of $\alpha_a$ gives
\[
        R(a,a)-R(a-1,a-1)
        =\alpha_a+\beta_a+\sum_{t=1}^{a-1}\Delta_{t,a}
        =f(a,a)-f(a-1,a-1),
\]
so $R(a,a)=f(a,a)$ by induction on $a$.  Finally, for $b>a$,
\[
        R(a,b)-R(a,b-1)=\beta_b+\sum_{t=1}^{a}\Delta_{t,b}.
\]
By the definitions of $\beta_b$ and $\Delta_{t,b}$, the right-hand side
telescopes to
\[
        f(a,b)-f(a,b-1).
\]
Since the diagonal values are already correct, induction on $b$ proves the
claim for all $0\leq a\leq b\leq m$.
\end{proof}

For a fixed pair $p<q$, apply the lemma to
\[
        f(a,b)=V_{pq}(a,b)=\frac{1}{(x_{q+b}-x_{p+a})^s},
        \qquad 0\leq a\leq b\leq m.
\]
This gives the exact expansion
\[
\begin{split}
        V_{pq}(y_p,y_q)
        ={}& C_{pq}
        +\sum_{t=1}^{m} \alpha^{pq}_t z_{p,t}
        +\sum_{u=1}^{m} \beta^{pq}_u z_{q,u}  \\
        &+\sum_{1\leq t<u\leq m}
        \Delta^{pq}_{t,u} z_{p,t}z_{q,u}.
\end{split}
\]
The mixed coefficients are
\[
\begin{split}
        \Delta^{pq}_{t,u}={}&
        V_{pq}(t,u)-V_{pq}(t-1,u) \\
        &-V_{pq}(t,u-1)+V_{pq}(t-1,u-1),
        \qquad 1\leq t<u\leq m.
\end{split}
\]
They are nonpositive.  Indeed, set
\[
        A=x_{p+t-1},\qquad B=x_{p+t},\qquad
        C=x_{q+u-1},\qquad D=x_{q+u}.
\]
Because $p<q$ and $t<u$, the corresponding indices satisfy
\[
        p+t-1 < p+t < q+u-1 < q+u,
\]
so the four input points satisfy $A<B<C<D$.  The Monge lemma gives
\[
        h(D-B)+h(C-A)-h(D-A)-h(C-B)\leq0,
\]
which is exactly $\Delta^{pq}_{t,u}\leq0$.

Write
\[
        \Delta^{pq}_{t,u}=-w^{pq}_{t,u},
        \qquad
        w^{pq}_{t,u}\geq0.
\]
Then
\[
        -w^{pq}_{t,u} z_{p,t}z_{q,u}
        = -w^{pq}_{t,u}z_{p,t}
        + w^{pq}_{t,u} z_{p,t}(1-z_{q,u}).
\]
The second term is represented by a directed arc
\[
        (p,t)\to(q,u)
\]
of capacity $w^{pq}_{t,u}$, because this arc contributes exactly when
$z_{p,t}=1$ and $z_{q,u}=0$.  The remaining unary term is absorbed into the
source/sink arcs.  Thus the min-cut graph is not a separate idea from the
lattice formulation: it is a polynomial-size graph representation of the same
order-ideal minimization problem for this pairwise Monge objective.  The use of
nonpositive quadratic threshold interactions is the standard graph-cut
representability condition for submodular quadratic pseudo-Boolean terms
\cite{KolmogorovZabih2004}.

Unaries are handled in the usual way.  A term $c z_v$ with $c\geq0$ is
represented by an arc $v\to T$ of capacity $c$.  A term $c z_v$ with $c<0$ is
written as
\[
        c z_v = c + (-c)(1-z_v),
\]
so one adds the constant $c$ and an arc $S\to v$ of capacity $-c$.

\section{Algorithm}

The nontrivial threshold case is $2\leq k<n$ and $x_1<\cdots<x_n$.  For
$k=0$, $k=1$, or $k=n$, the optimum is immediate.

\begin{enumerate}[label=\textbf{Step \arabic*.}]
\item Sort the input points increasingly and check that they are distinct.
\item Set $m=n-k$ and introduce one binary threshold node $(r,t)$ for every
      $r=1,\ldots,k$ and $t=1,\ldots,m$.
\item For every pair $1\leq p<q\leq k$, compute the triangular expansion
      coefficients $C_{pq}$, $\alpha^{pq}$, $\beta^{pq}$, and
      $\Delta^{pq}_{t,u}$ for $1\leq t<u\leq m$.
\item For every coefficient $\Delta^{pq}_{t,u}=-w^{pq}_{t,u}<0$, add the
      finite graph-cut edge
      \[
          (p,t)\to(q,u)
      \]
      with capacity $w^{pq}_{t,u}$, and add the corresponding unary
      contribution $-w^{pq}_{t,u}z_{p,t}$ to the unary coefficient of
      $(p,t)$.
\item Add all unary terms as source/sink arcs and record the accumulated
      constant term.
\item Let $B$ be the sum of all finite source/sink and interaction capacities,
      and set $U=B+1$ or any larger valid capacity in the chosen arithmetic
      model.  Add closure arcs of capacity $U$ enforcing
      \[
          z_{r,t+1}\leq z_{r,t},
          \qquad
          z_{r,t}\leq z_{r+1,t}.
      \]
\item Compute a minimum $S$--$T$ cut.
\item Recover
      \[
          y_r=\sum_{t=1}^{m} z_{r,t},
          \qquad
          i_r=r+y_r.
      \]
\end{enumerate}

\section{Correctness}

\begin{theorem}
The min-cut algorithm returns a globally optimal cardinality-$k$ subset minimizing the one-dimensional Riesz $s$-energy.
\end{theorem}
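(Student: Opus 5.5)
The argument runs as a chain of exact equalities between objective functions, closed off by the capacity $U$. Three facts need to be established: (i) every feasible $y$ corresponds to a cut of value exactly $F(y)$ minus a fixed constant; (ii) no minimum cut violates a closure arc; (iii) the value of a closure-respecting cut equals the pseudo-Boolean objective evaluated at its source-side indicator vector. Once these hold, a minimum cut yields a minimizer of $F$, and the map $y\mapsto i_r=r+y_r$ turns it into an optimal index vector. We may use the bijection from the threshold section between nondecreasing vectors $0\le y_1\le\cdots\le y_k\le m$ and order ideals $Z(y)$, together with the triangular threshold expansion lemma.

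\textbf{Step 1 (exact pseudo-Boolean representation).} For $z\in\{0,1\}^{k\times m}$ that satisfies the closure constraints, define
\[
\Phi(z)=\sum_{p<q}\Big[C_{pq}+\sum_t\alpha^{pq}_t z_{p,t}+\sum_u\beta^{pq}_u z_{q,u}+\sum_{t<u}\Delta^{pq}_{t,u}z_{p,t}z_{q,u}\Big].
\]
Because $z_{p,t}=\ones[y_p\ge t]$ with $y_p\le y_q$, the lemma applies with $a=y_p$ and $b=y_q$. Therefore $\Phi(z)=\sum_{p<q}V_{pq}(y_p,y_q)=F(y)$.

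Next, rewrite each term as in the construction. Use $\Delta=-w$ with $w\ge 0$ (this sign comes from the Monge lemma), and apply $-wz_pz_q=-wz_p+wz_p(1-z_q)$ together with the unary reductions. This gives $\Phi(z)=K+\mathrm{cut}_{\mathrm{fin}}(z)$. Here $K$ is the accumulated constant, and $\mathrm{cut}_{\mathrm{fin}}(z)$ is the total capacity of finite arcs leaving the source side $\{S\}\cup\{v:z_v=1\}$. To verify this, check arc by arc: an arc $v\to T$ is cut iff $z_v=1$, an arc $S\to v$ is cut iff $z_v=0$, and an arc $(p,t)\to(q,u)$ is cut iff $z_{p,t}=1$ and $z_{q,u}=0$.

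\textbf{Step 2 (closure arcs are never cut).} Check that the closure arc orientation matches the implications $z_{r,t+1}\le z_{r,t}$ and $z_{r,t}\le z_{r+1,t}$: the arc $(r,t+1)\to(r,t)$ is cut exactly when $z_{r,t+1}=1$ and $z_{r,t}=0$, and the other family works the same way. Every cut containing a closure arc has value at least $U>B$. The cut induced by any feasible $y$, for instance $y=0$, has value $\mathrm{cut}_{\mathrm{fin}}\le B$. Hence a minimum cut contains no closure arc, so its source side is an order ideal $Z(y^*)$ for some feasible $y^*$.

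\textbf{Step 3 (conclusion).} For every feasible $y$, minimality gives
\[
F(y^*)=K+\mathrm{cut}(Z(y^*))\le K+\mathrm{cut}(Z(y))=F(y).
\]
The recovery $y_r=\sum_t z_{r,t}$ returns exactly $y^*$, and $i_r=r+y^*_r$ is an increasing index vector of length $k$ with $E_s(I)=F(y^*)$. It is therefore a global minimizer.

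The main obstacle is the bookkeeping in Step 1. The pairwise expansion is valid only on the triangle $a\le b$, so $\Phi$ agrees with $F$ only on closure-respecting $z$; this is exactly why Step 2 has to come before the values are compared. One must also confirm that $\Delta^{pq}_{t,u}$ is defined only for $t<u$, and that the diagonal and $t\ge u$ cross terms have been correctly absorbed into $\alpha$. The lemma already guarantees this.
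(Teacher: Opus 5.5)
Your proposal is correct and follows essentially the same route as the paper's proof: closure arcs of capacity $U>B$ force the source side of a minimum cut to be an order ideal $Z(y^*)$, and on such ideals the triangular expansion, together with the Monge sign $\Delta^{pq}_{t,u}\leq 0$, makes the cut capacity equal to $F(y)$ plus a constant. Your Step 2 spells out the comparison the paper states in one line, namely that any closure-respecting cut costs at most $B$, while any cut crossing a closure arc costs at least $U>B$.
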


\begin{proof}
Consider an $S$--$T$ cut and let $z_v=1$ exactly for nodes $v$ on the source
side.  Since the capacity $U$ of each closure arc is larger than the total
capacity of all finite arcs, a minimum cut cannot cross a closure arc: there is
always at least one closed feasible source set whose capacity is smaller than
$U$.  Therefore every minimum cut satisfies
\[
        z_{r,t+1}\leq z_{r,t},
        \qquad
        z_{r,t}\leq z_{r+1,t}.
\]
By the order-ideal representation above, such closed source sets are in
one-to-one correspondence with feasible vectors
\[
        0\leq y_1\leq\cdots\leq y_k\leq m,
        \qquad
        y_r=\sum_{t=1}^m z_{r,t},
\]
and hence with cardinality-$k$ subsets via $i_r=r+y_r$.

For each pair $p<q$, the triangular threshold expansion represents the pairwise
cost
\[
        V_{pq}(y_p,y_q)
        =\frac{1}{(x_{q+y_q}-x_{p+y_p})^s}
\]
exactly on the feasible domain $0\leq y_p\leq y_q\leq m$.  The mixed
coefficients are nonpositive by the Monge inequality.  Hence every mixed term
has the form $-wxy$ with $w\geq0$ and is represented exactly by a directed
edge plus the corresponding unary correction.

All unary terms are represented exactly by source/sink arcs, up to constants
independent of the cut.  Summing over all rank pairs, the capacity of every
closed finite cut equals
\[
        E_s(i_1,\ldots,i_k)+\text{constant},
\]
where the constant is independent of the chosen subset.  Minimizing cut
capacity is therefore equivalent to minimizing the Riesz energy.
\end{proof}

\section{Complexity}

Set
\[
        m=n-k.
\]
The graph has one threshold node for each pair $(r,t)$ with
$r=1,\ldots,k$ and $t=1,\ldots,m$.  Thus the number of nonterminal
nodes is
\[
        N=km=k(n-k).
\]
Including the source and sink only adds two further nodes.

The closure constraints contribute
\[
        k(m-1)+(k-1)m
        = 2km-k-m
        = O(km)
\]
closure arcs.  The unary terms contribute at most $O(km)$
source/sink arcs.  The dominant contribution comes from the pairwise
Riesz interactions.  For each rank pair $p<q$ and each threshold pair
$t<u$, there is at most one finite interaction arc.  Hence the number
of such arcs is bounded by
\[
        \binom{k}{2}\binom{m}{2}
        = O(k^2m^2)
        = O(k^2(n-k)^2).
\]
Consequently the total number of arcs satisfies
\[
        M = O(k^2m^2)=O(k^2(n-k)^2)
\]
in the worst case.

There is only one minimum $S$--$T$ cut to compute.  Therefore the
algorithmic cost is the cost of one maximum-flow/minimum-cut computation
on a graph with
\[
        N=k(n-k)
        \quad\text{nodes and}\quad
        M=O(k^2(n-k)^2)
        \quad\text{arcs}.
\]
Using, for example, the standard worst-case bound $O(N^2M)$ for a
Dinic-type implementation gives the conservative estimate
\[
        O(N^2M)
        = O\!\left(k^4(n-k)^4\right).
\]
In the balanced case $k=\Theta(n)$, this becomes $O(n^8)$ under this
conservative bound.  If one uses a more advanced strongly polynomial
maximum-flow algorithm with a bound of order $O(NM)$, the corresponding
estimate is
\[
        O(NM)
        = O\!\left(k^3(n-k)^3\right),
\]
which is $O(n^6)$ in the balanced case.  In practice, specialized graph-cut
implementations are often substantially faster than these worst-case
bounds suggest.

The graph construction itself requires evaluating the finite differences
for all rank pairs and threshold pairs.  With direct tabulation of the
values $V_{pq}(a,b)$ and prefix-sum bookkeeping for the unary corrections,
this can be done in
\[
        O(k^2m^2)=O(k^2(n-k)^2)
\]
time and space, up to the cost of arithmetic operations.  A naive reference
implementation may spend more time computing the same coefficients, but this
is not inherent in the construction.

For fixed $k$, we have $m=\Theta(n)$, so the graph has $O(n)$ nodes and
$O(n^2)$ arcs.  The same conservative $O(N^2M)$ bound then gives
$O(n^4)$ for the min-cut step.  For $k=\Theta(n)$, the graph has
$O(n^2)$ nodes and $O(n^4)$ arcs.

In exact arithmetic, if the points are rational and $s$ is a fixed positive
integer, then all energy values and capacities are rational numbers with
polynomially bounded encoding length.  Thus the above graph construction,
combined with any polynomial-time exact max-flow algorithm, gives a standard
Turing-polynomial algorithm.  More generally, the same statement applies in any
arithmetic model in which the required values $(x_j-x_i)^{-s}$ and their finite
differences can be evaluated and compared with polynomially many bit
operations.  For arbitrary non-integer or real $s$, the combinatorial
construction remains valid for every fixed input, but the result should be read
as a real-arithmetic or oracle-arithmetic algorithm unless such an evaluation
model is specified explicitly.

\subsection{Empirical Runtime Efficiency}
\label{subsec:empirical-runtime-efficiency}

The worst-case bounds above are polynomial but conservative.  To assess the
practical break-even point, I compared the explicit threshold min-cut
implementation with complete enumeration on balanced instances $n=2k$.  The
benchmark uses the companion Python code for the min-cut construction, with a
standard preflow-push maximum-flow routine, and a Numba-compiled exhaustive
combination enumerator for the baseline.  Thus the enumeration baseline is quite
favorable to brute force, while the min-cut code is a direct reference
implementation rather than a highly tuned graph-cut solver.  The test points are
deterministic mildly irregular ordered points,
\[
        x_i=i+0.05\sin(1.7i)+0.01(i/n)^2,
        \qquad i=0,\ldots,n-1,
\]
with $s=1$.  All instances for which both methods were run produced identical
energies and selected subsets.

For the theoretical comparison, complete enumeration with direct energy
evaluation performs $\Theta(k^2\binom{2k}{k})$ pair contributions when
$n=2k$.  Comparing this expression with the balanced conservative cut bound
$k^8=O(n^8)$ gives the first crossing at $k=13$, that is $n=26$.  If the looser
scale $n^8/2$ is used instead, the corresponding crossing is $n=36$.  Counting
only the number of subsets, without the $k^2$ pair-evaluation factor, shifts the
crossings to $n=38$ and $n=48$, respectively.

\begin{figure}[H]
\centering
\begin{tikzpicture}
\begin{axis}[
    width=0.92\linewidth,
    height=6.2cm,
    xlabel={$n$ with $k=n/2$},
    ylabel={CPU time in seconds},
    ymode=log,
    ymin=5e-6,
    ymax=25,
    xmin=4,
    xmax=60,
    xtick={4,10,20,30,40,50,60},
    grid=both,
    minor grid style={gray!15},
    major grid style={gray!30},
    legend style={at={(0.03,0.97)},anchor=north west,draw=none,fill=white},
    tick label style={font=\scriptsize},
    label style={font=\small},
    legend cell align={left}
]
\addplot+[mark=*,thick] coordinates {
    (4,0.00000872) (6,0.00000968) (8,0.00000601) (10,0.00000932)
    (12,0.00002790) (14,0.00012219) (16,0.00046071)
    (18,0.00200911) (20,0.00885193) (22,0.04348486)
    (24,0.17560995) (26,0.81302590) (28,3.47821405) (30,15.20843832)
};
\addlegendentry{complete enumeration}
\addplot+[mark=square*,thick,selred] coordinates {
    (4,0.00410619) (6,0.00041078) (8,0.00088098) (10,0.00200022)
    (12,0.00302792) (14,0.00852541) (16,0.00802257)
    (18,0.01684359) (20,0.02325281) (22,0.04035847)
    (24,0.18270274) (26,0.07867415) (28,0.11064484)
    (30,0.17179419) (36,0.54977612) (40,0.62227608)
    (44,1.24089540) (48,1.66338642) (52,2.37802957)
    (56,3.73192504) (60,5.38646040)
};
\addlegendentry{threshold min-cut}
\end{axis}
\end{tikzpicture}
\caption{Empirical runtime comparison on balanced instances $n=2k$ for $s=1$.
Enumeration was run until $n=30$, after which the number of subsets becomes too
large for a useful brute-force baseline in this environment.  The first noisy
crossing appears at $n=22$, but the min-cut implementation becomes clearly
faster from $n=26$ onward.  Times are single-run CPU wall-clock measurements and
are intended as a reproducibility check rather than as a tuned implementation
benchmark.}
\label{fig:runtime-break-even}
\end{figure}

\begin{table}[H]
\centering
\small
\begin{tabular}{rrrrr}
\toprule
$n$ & $k$ & $\binom{n}{k}$ & enumeration (s) & min-cut (s) \\
\midrule
20 & 10 & 184,756 & 0.0089 & 0.0233 \\
22 & 11 & 705,432 & 0.0435 & 0.0404 \\
24 & 12 & 2,704,156 & 0.1756 & 0.1827 \\
26 & 13 & 10,400,600 & 0.8130 & 0.0787 \\
28 & 14 & 40,116,600 & 3.4782 & 0.1106 \\
30 & 15 & 155,117,520 & 15.2084 & 0.1718 \\
36 & 18 & 9,075,135,300 & -- & 0.5498 \\
48 & 24 & 32,247,603,683,100 & -- & 1.6634 \\
60 & 30 & 118,264,581,564,861,424 & -- & 5.3865 \\
\bottomrule
\end{tabular}
\caption{Representative timing data for the experiment in Figure~\ref{fig:runtime-break-even}.
For $n\geq 36$, complete enumeration was not run because the number of subsets
is already prohibitive.}
\label{tab:runtime-break-even}
\end{table}

The empirical behavior agrees with the simple theoretical comparison.  The
first measured crossing occurs at $n=22$, but $n=24$ is essentially a tie and is
slightly unfavorable to the cut implementation.  From $n=26$ onward, enumeration
is already more than one order of magnitude slower, and the gap widens rapidly.
This suggests that, despite the pessimistic $O(n^8)$ conservative bound, the
explicit min-cut formulation is computationally useful at sizes where complete
enumeration has only just become infeasible.

\section{Small verification examples}

For small instances, the min-cut algorithm can be checked against exhaustive enumeration.  The following representative instances were solved by the min-cut implementation and independently verified by brute-force enumeration.  The indices below are one-based, as in the mathematical formulation of this note.

\begin{itemize}[leftmargin=*]
\item $x=(0,1,2,3,4)$, $k=3$, $s=1$: optimum $(1,3,5)$ with $E_s=1.25$.
\item $x=(0,0.4,1.1,2.8,3,5)$, $k=3$, $s=2$: optimum $(1,4,6)$ with $E_s\approx0.3741626$.
\item $x=(0,1,10,11,12)$, $k=3$, $s=1$: optimum $(1,3,5)$ with $E_s\approx0.6833333$.
\item $x=(0,1,2,4,7,11)$, $k=4$, $s=1.5$: optimum $(1,4,5,6)$ with $E_s\approx0.5778501$.
\item $x=(0,1,2,3,10,11,12,20)$, $k=4$, $s=1$: optimum $(1,4,7,8)$ with $E_s\approx0.7616013$.
\item $x=(0,0.2,0.9,2.7,4.1,4.2,8)$, $k=3$, $s=3$: optimum $(1,5,7)$ with $E_s\approx0.0333205$.
\item $x=(0,5,6,7,8,20,21,40)$, $k=4$, $s=0.5$: optimum $(1,5,7,8)$ with $E_s\approx1.4134277$.
\item $x=(0,1,1.5,2.2,6,9,9.1,14,18)$, $k=5$, $s=2$: optimum $(1,5,7,8,9)$ with $E_s\approx0.2914438$.
\end{itemize}

In computational tests on random instances with $n\leq 10$, the min-cut solution agreed with brute-force enumeration for all tested values of $k$ and for several choices of $s$, including $s=0.5,1,1.5,2,3$.  The accompanying flat Python reference implementation contains the same examples and a randomized self-test routine.

\section{Relation to submodular minimization}

The min-cut construction is a specialized algorithmic realization of the more conceptual lattice argument.  The feasible vectors form a finite distributive lattice under componentwise minimum and maximum.  The threshold encoding above gives the relevant Birkhoff-type representation explicitly: feasible vectors correspond to order ideals of the threshold poset, and meet and join correspond to intersection and union of these ideals \cite{Birkhoff1937,DaveyPriestley2002}.  Therefore submodular minimization over this lattice is equivalent to submodular minimization over the associated ring family of closed sets \cite{Topkis1998,Fujishige2005,Schrijver2003}.  In particular, by applying Birkhoff's representation theorem and any polynomial-time submodular minimization algorithm for set functions, a submodular objective on a distributive lattice given through its join-irreducibles can be minimized in time polynomial in the number of join-irreducibles and in the evaluation-oracle time; see also the explicit use of this reduction in the distributive-lattice diversity framework of de Berg, L\'opez Mart\'inez, and Spieksma~\cite{DeBerg2025}.

The more concrete minimum-cut direction in this manuscript was suggested by the earlier work of de Berg, L\'opez Mart\'inez, and Spieksma on diverse minimum $s$--$t$ cuts~\cite{DeBergDiverseCuts2024}.  In that work, ordered collections of minimum cuts are related to distributive lattices, and pairwise-sum and coverage diversity objectives are solved via submodular function minimization.  Here the same conceptual bridge is used in the opposite direction: once the Riesz objective is shown to be submodular on the lattice of threshold ideals, the triangular expansion exposes a direct graph-cut representation for this particular pairwise Monge objective.

Thus, even without exploiting the explicit pairwise graph-cut representation, the one-dimensional fixed-cardinality Riesz energy problem is polynomial-time solvable by general submodular minimization over distributive lattices \cite{SchrijverSFM2000,IwataFleischerFujishige2001,Orlin2009}.  The graph-cut formulation is more concrete and simpler to implement for this particular objective because the triangular expansion turns each pairwise Monge term into graph-representable nonpositive threshold interactions plus unary terms.

\section{\texorpdfstring{$\ell_1$}{l1}-staircase Pareto front example}
\label{sec:l1-staircase-example}

A related direction concerns $\ell_1$-staircases, as defined in
\cite{Emmerich2026b}.  On a monotone sorted set, the $\ell_1$ distance is
isometric to distance on a line after choosing a consistent orientation of the
coordinates: along such a chain, all coordinate differences have fixed signs,
so the $\ell_1$ distance between two points is the absolute difference of a
single cumulative coordinate.  Two-dimensional Pareto fronts, after orienting
objectives consistently, fall into this category.  Thus the one-dimensional
construction suggests a direct route for $\ell_1$-staircase instances whose
candidate points form such a monotone chain.

We illustrate this reduction on the Pareto-front point set used in the
counterexample for the approximate dynamic-programming recurrence in
\cite{Emmerich2025DP}.  The candidate points are
\[
\begin{aligned}
P_1&=(2,20),  &P_2&=(4,18),  &P_3&=(6,16),  &P_4&=(9,12),\\
P_5&=(11,8), &P_6&=(14,5),  &P_7&=(17,3).
\end{aligned}
\]
They are sorted by increasing $f_1$ and decreasing $f_2$.  Hence, for
$i<j$,
\[
        \|P_j-P_i\|_1
        =(f_1(P_j)-f_1(P_i))+(f_2(P_i)-f_2(P_j)).
\]
Equivalently, after orienting the second coordinate, the map
\[
        \tau(P_i)=f_1(P_i)-f_2(P_i)+18
\]
embeds the staircase isometrically into the line.  For the above points this
one-dimensional coordinate is
\[
        \tau=(0,4,8,15,21,27,32).
\]
Therefore Riesz-energy subset selection with $\ell_1$ distance on the
staircase is exactly the one-dimensional Riesz-energy subset-selection problem
on these $\tau$-values.  For $s=1$ and $k=5$, exact minimization gives
\[
        S^*=\{P_1,P_3,P_4,P_6,P_7\},
\]
with
\[
\begin{aligned}
E_1^{\ell_1}(S^*)
&=\sum_{P_i,P_j\in S^*,\,i<j}\|P_j-P_i\|_1^{-1} \\
&=\frac1{8}+\frac1{15}+\frac1{27}+\frac1{32}
  +\frac1{7}+\frac1{19}+\frac1{24}
  +\frac1{12}+\frac1{17}+\frac1{5} \\
&\approx 0.83927.
\end{aligned}
\]

\begin{figure}[htbp]
\centering
\begin{tikzpicture}[
    >=Latex,
    every node/.style={font=\small},
    cand/.style={circle,draw=gray!65,fill=gray!18,inner sep=1.3pt,minimum size=6.5mm},
    opt/.style={circle,draw=selred,fill=selred!18,inner sep=1.3pt,minimum size=6.5mm},
    optline/.style={selred,very thick},
    stair/.style={gray!45,thick},
    taupt/.style={circle,draw=gray!65,fill=gray!18,inner sep=1.1pt,minimum size=5.5mm},
    tauopt/.style={circle,draw=selred,fill=selred!18,inner sep=1.1pt,minimum size=5.5mm}
]

\begin{scope}[x=0.35cm,y=0.20cm]
\draw[->,thick] (0,0) -- (19.2,0) node[below right] {$f_1$};
\draw[->,thick] (0,0) -- (0,22.5) node[above left] {$f_2$};

\foreach \x in {0,5,10,15} {
  \draw[gray!35] (\x,0) -- (\x,-0.35);
  \node[font=\scriptsize] at (\x,-1.0) {$\x$};
}
\foreach \y in {0,5,10,15,20} {
  \draw[gray!35] (0,\y) -- (-0.35,\y);
  \node[font=\scriptsize,anchor=east] at (-0.55,\y) {$\y$};
}

\draw[stair] (2,20) -- (4,20) -- (4,18)
             -- (6,18) -- (6,16)
             -- (9,16) -- (9,12)
             -- (11,12) -- (11,8)
             -- (14,8) -- (14,5)
             -- (17,5) -- (17,3);

\draw[optline] (2,20) -- (6,20) -- (6,16)
               -- (9,16) -- (9,12)
               -- (14,12) -- (14,5)
               -- (17,5) -- (17,3);

\node[opt]  (P1) at (2,20) {$P_1$};
\node[cand] (P2) at (4,18) {$P_2$};
\node[opt]  (P3) at (6,16) {$P_3$};
\node[opt]  (P4) at (9,12) {$P_4$};
\node[cand] (P5) at (11,8) {$P_5$};
\node[opt]  (P6) at (14,5) {$P_6$};
\node[opt]  (P7) at (17,3) {$P_7$};

\node[anchor=west,font=\scriptsize] at (6.0,22.0)
      {Pareto front in objective space};
\node[anchor=west,font=\scriptsize,text=selred] at (6.0,20.4)
      {$\ell_1$-Riesz optimum for $k=5$: $\{P_1,P_3,P_4,P_6,P_7\}$};
\end{scope}

\begin{scope}[shift={(0,-1.55)},x=0.19cm,y=1cm]
\draw[->,thick] (-0.5,0) -- (34,0) node[right] {$\tau=f_1-f_2+18$};
\foreach \t/\lab/\style in {0/P_1/tauopt,4/P_2/taupt,8/P_3/tauopt,15/P_4/tauopt,21/P_5/taupt,27/P_6/tauopt,32/P_7/tauopt} {
    \draw[gray!35] (\t,0) -- (\t,-0.12);
    \node[font=\scriptsize,below=2pt] at (\t,0) {$\t$};
    \node[\style] at (\t,0.33) {$\lab$};
}
\node[anchor=west,font=\scriptsize] at (0,0.95)
      {isometric one-dimensional coordinates};
\end{scope}

\end{tikzpicture}
\caption{A two-dimensional monotone Pareto front that is an $\ell_1$-staircase.
The candidate points are the same seven points used in the dynamic-programming
counterexample of \cite{Emmerich2025DP}.  Since $f_1$ increases and $f_2$
decreases along the sorted front, the $\ell_1$ distance between two points is
the difference of the cumulative coordinate $\tau=f_1-f_2+18$.  Thus the
$\ell_1$-Riesz subset-selection problem is isometric to the one-dimensional
problem on $\tau=(0,4,8,15,21,27,32)$.  For $s=1$ and $k=5$, exact minimization
selects $\{P_1,P_3,P_4,P_6,P_7\}$.}
\label{fig:l1-staircase-pareto-example}
\end{figure}

This example does not introduce a new problem class beyond the one-dimensional algorithm, but an application of the one-dimensional algorithm to a two-dimensional, or more generally n-dimensional, monotone staircase problem.  The monotone front can be measured either in
objective space with $\ell_1$ distance or on the line with the coordinate
$\tau$, and the Riesz energies agree exactly.  Consequently, the min-cut
algorithm developed for ordered one-dimensional point sets applies directly to
such $\ell_1$-staircase Pareto-front approximation problems.

\section{Conclusions and Outlook}

We have shown that the one-dimensional fixed-cardinality minimum Riesz
$s$-energy subset problem is polynomial-time solvable in the arithmetic models
specified above.  The structural reduction itself holds for every real $s>0$.
The proof rests on a
direct Monge property for the one-dimensional Riesz interaction and on the
distributive lattice structure of increasing index vectors.  Together, these
ingredients show that the objective is submodular on the feasible lattice and
therefore minimizable in polynomial time.

Beyond this conceptual reduction, the paper gives an explicit minimum
$S$--$T$ cut algorithm.  The construction uses $k(n-k)$ threshold variables
and $O(k^2(n-k)^2)$ finite pairwise edges.  Its preprocessing step has size
and time $O(k^2(n-k)^2)$, after which the remaining computational task is a
single maximum-flow/minimum-cut computation on a graph with
$N=k(n-k)$ nodes and $M=O(k^2(n-k)^2)$ arcs.  Under an $O(NM)$ max-flow
bound this gives an $O(k^3(n-k)^3)$ cut step, while the more conservative
$O(N^2M)$ bound gives $O(k^4(n-k)^4)$.  The accompanying open-source Python
implementation makes this cut construction directly reproducible and also
contains brute-force checks for small instances.  The empirical comparison in
Subsection~\ref{subsec:empirical-runtime-efficiency} shows that, for the
reference implementation on balanced test instances, the min-cut approach
overtakes a Numba-compiled exhaustive enumerator around $n=24$--$26$.

Section~\ref{sec:l1-staircase-example} illustrates the same construction for
monotone $\ell_1$-staircases, including a small Pareto-front approximation
example.  In such cases the $\ell_1$ distance is isometric to distance on a
line, so the one-dimensional min-cut algorithm applies directly after the
cumulative-coordinate transformation.

The present construction uses a direct graph-cut representation of all mixed threshold coefficients. Since these coefficients arise from a highly structured Riesz kernel rather than from an arbitrary submodular quadratic function, it is plausible that more compact exact representations are possible for special point sets, such as equally spaced points, or through interval/low-rank decompositions of the Monge coefficient matrices. We leave such refinements as future work.

Moreover, in applications such as Pareto-front and skyline approximation, append-only updates are not the most relevant dynamic case: new candidate points may appear at arbitrary positions in the sorted order, existing points may be removed, and small perturbations may change local spacings. A more useful dynamic viewpoint is to regard an insertion into the ordered input as the splitting of one adjacent gap, and a deletion as the merging of two adjacent gaps. In an implementation based on absolute threshold positions, $z_{r,j}=\mathbf{1}[i_r\ge j]$, an insertion at position $\ell$ introduces at most one new threshold node for each selected rank $r$. Since the finite pairwise graph-cut coefficients are discrete second differences involving two threshold gaps, all coefficients whose two gaps are unaffected can be copied, up to reindexing, while only the coefficients involving the split or merged gap have to be updated. This suggests local graph-edit work on the order of $O(k^2(n-k))$ per sorted insertion, deletion, or order-preserving local update, rather than reconstructing all $O(k^2(n-k)^2)$ finite interaction arcs. The previous residual network then provides a natural exact warm start for the next max-flow computation. Establishing a nontrivial worst-case bound for this dynamic reoptimization step remains open.

\appendix

\section{Appendix: A simple guide to order ideals, distributive lattices, and cuts}
\label{app:order-ideals}

This appendix explains the order-theoretic language used in the paper. It is purely expository and adds nothing new to the results. It is didactic and intended for readers without a specific background in these topics. The material is standard; see Birkhoff's representation theorem for finite distributive lattices \cite{Birkhoff1937} and modern introductions to order and lattice theory such as Davey and Priestley \cite{DaveyPriestley2002}. However, the purpose here is not to develop lattice theory in
general, but we would like to understand it in the context of the application of this paper, in which we use
 order ideals (or downward closures) and how a minimum cut is related to this representation.

\subsection{Partial orders and order ideals}

A partially ordered set, or poset, is a set $P$ together with a relation
$\preceq$ that is reflexive, antisymmetric, and transitive.  One should read
$x\preceq y$ as saying that $x$ is a prerequisite, lower element, or predecessor
of $y$.  In this paper the elements of the relevant poset are threshold nodes
$(r,t)$.

An \emph{order ideal}, also called a \emph{lower set} or \emph{down-set}, is a
subset $I\subseteq P$ with the following closure property:
\[
        y\in I \text{ and } x\preceq y
        \quad\Longrightarrow\quad
        x\in I.
\]

Thus, once an ideal contains an element, it must also contain all elements below
it.  In algorithmic terms, an order ideal is a set closed under prerequisite
constraints.

\subsection{A small order ideal of feasible vectors}

Before introducing threshold nodes, it is useful to look directly at a small
poset of feasible vectors.  Consider the feasible vectors of length three with
entries between $0$ and $2$,
\[
        \mathcal Y_3=\{(y_1,y_2,y_3):0\leq y_1\leq y_2\leq y_3\leq 2\}.
\]
We order these vectors componentwise:
\[
        y\leq y'
        \quad\Longleftrightarrow\quad
        y_1\leq y'_1,\quad y_2\leq y'_2,\quad y_3\leq y'_3.
\]
An example of an order ideal is the principal ideal generated by
$a=(0,1,2)$:
\[
        \downarrow a
        =\{y\in \mathcal Y_3:y\leq (0,1,2)\}
        =\{(0,0,0),(0,0,1),(0,0,2),(0,1,1),(0,1,2)\}.
\]
In the Hasse diagram below, these five vectors are shaded.  The downward-closed
property means that, once the ideal contains $(0,1,2)$, it must contain every
vector below it, such as $(0,0,2)$ and $(0,1,1)$.  It does not have to contain
$(1,1,2)$, because $(1,1,2)\nleq (0,1,2)$.

\begin{figure}[htbp]
\centering
\begin{tikzpicture}[
    scale=0.66,
    every node/.style={font=\small},
    vec/.style={draw,rounded corners,inner sep=2pt,minimum width=16mm},
    ideal/.style={draw,rounded corners,inner sep=2pt,minimum width=16mm,fill=blue!15}
]
\node[ideal] (000) at (0,0) {$(0,0,0)$};

\node[ideal] (001) at (0,1.05) {$(0,0,1)$};

\node[ideal] (002) at (-1.25,2.1) {$(0,0,2)$};
\node[ideal] (011) at ( 1.25,2.1) {$(0,1,1)$};

\node[ideal] (012) at (-0.65,3.15) {$(0,1,2)$};
\node[vec]   (111) at ( 1.85,3.15) {$(1,1,1)$};

\node[vec]   (022) at (-1.25,4.2) {$(0,2,2)$};
\node[vec]   (112) at ( 1.25,4.2) {$(1,1,2)$};

\node[vec]   (122) at (0,5.25) {$(1,2,2)$};

\node[vec]   (222) at (0,6.3) {$(2,2,2)$};

\draw (000)--(001);
\draw (001)--(002);
\draw (001)--(011);
\draw (002)--(012);
\draw (011)--(012);
\draw (011)--(111);
\draw (012)--(022);
\draw (012)--(112);
\draw (111)--(112);
\draw (022)--(122);
\draw (112)--(122);
\draw (122)--(222);
\end{tikzpicture}
\caption{The componentwise partial order on the feasible vectors
$0\leq y_1\leq y_2\leq y_3\leq 2$.  The shaded set is the order ideal
$\downarrow(0,1,2)$.}
\label{fig:y-vector-order-ideal}
\end{figure}

The threshold-node construction used in the main text is a more economical way
of representing such downward closure.  Instead of drawing the entire lattice of
feasible $y$-vectors, it records which threshold events $y_r\geq t$ have become
true.

For the threshold variables in this paper,
\[
        z_{r,t}=1 \quad\Longleftrightarrow\quad y_r\geq t,
\]
the two natural implications are
\[
        z_{r,t+1}=1 \Rightarrow z_{r,t}=1,
        \qquad
        z_{r,t}=1 \Rightarrow z_{r+1,t}=1.
\]
The first implication says that if $y_r$ reaches level $t+1$, then it also
reaches level $t$.  The second says that if $y_r\geq t$ and the vector is
nondecreasing, then $y_{r+1}\geq t$.  Equivalently, the selected threshold nodes
form an order ideal in the poset generated by
\[
        (r,t)\preceq (r,t+1),
        \qquad
        (r+1,t)\preceq (r,t).
\]
The direction of $\preceq$ is chosen so that membership of an upper threshold
forces membership of all lower/prerequisite thresholds.

\subsection{From feasible vectors to ideals and back}

Let
\[
        0\leq y_1\leq y_2\leq\cdots\leq y_k\leq m.
\]
Given such a vector, define
\[
        I(y)=\{(r,t):1\leq r\leq k,\;1\leq t\leq y_r\}.
\]
Then $I(y)$ is an order ideal.  Conversely, if $I$ is an order ideal in the
threshold poset, define
\[
        y_r=|\{t:(r,t)\in I\}|.
\]
The vertical closure implication implies that each column is an initial segment
of thresholds, so this formula recovers a well-defined height $y_r$.  The
horizontal implication implies $y_r\leq y_{r+1}$.  Therefore $I$ determines a
feasible vector.  These two constructions are inverse to each other:
\[
        y
        \quad\longleftrightarrow\quad
        I(y).
\]
This is the concrete version, for the present problem, of the general
Birkhoff representation principle that finite distributive lattices can be
represented by order ideals of a poset \cite{Birkhoff1937,DaveyPriestley2002}.

\subsection{Why this is a distributive lattice}

The feasible vectors are ordered componentwise:
\[
        y\leq y'
        \quad\Longleftrightarrow\quad
        y_r\leq y'_r \text{ for all } r.
\]
The meet and join are componentwise minimum and maximum:
\[
        (y\wedge y')_r=\min(y_r,y'_r),
        \qquad
        (y\vee y')_r=\max(y_r,y'_r).
\]
Under the ideal representation, these operations become set intersection and
set union:
\[
        I(y\wedge y')=I(y)\cap I(y'),
        \qquad
        I(y\vee y')=I(y)\cup I(y').
\]
Since union and intersection distribute over each other, the feasible vectors
form a finite distributive lattice.  This is useful because submodularity is
naturally stated on lattices: a function $F$ is submodular if
\[
        F(y)+F(y')\geq F(y\wedge y')+F(y\vee y')
\]
for all feasible $y,y'$.  This is the lattice analogue of the familiar
submodularity inequality for set functions.  General references for submodular
functions and submodularity on ordered structures include Topkis
\cite{Topkis1998}, Fujishige \cite{Fujishige2005}, and Schrijver
\cite{Schrijver2003}.

\subsection{Why submodularity gives a lattice reduction}

The Monge inequality proved in the main text shows that each pairwise Riesz
term is submodular with respect to the above meet and join.  Since sums of
submodular functions are submodular, the full Riesz objective is submodular on
the distributive lattice of feasible vectors.  Hence the original subset problem
is reduced to the following standard form: \textbf{minimize a submodular function over the order ideals of a finite poset.}
This is the conceptual lattice reduction.  It is independent of the later
special graph construction.  The graph construction is an additional benefit of
the particular pairwise Monge structure: after the triangular threshold
expansion, all mixed threshold coefficients are nonpositive, which makes the
objective graph-representable by one minimum cut.

\subsection{Why a minimum cut enforces order ideals}

Consider a directed graph with source $S$, sink $T$, and one node for each
threshold variable.  An $S$--$T$ cut partitions the nodes into a source side and
a sink side.  We interpret
\[
        z_{r,t}=1
        \quad\Longleftrightarrow\quad
        (r,t) \text{ lies on the source side.}
\]
To enforce an implication
\[
        z_a=1 \Rightarrow z_b=1,
\]
add a large-capacity arc $a\to b$.  If a cut places $a$ on the source side and
$b$ on the sink side, then this arc is cut and the cut pays the large capacity.
Choosing the large capacity $U$ larger than the total finite capacity guarantees
that no optimum cut violates the implication.  Therefore the source side of an
optimal finite cut is closed under all implications; it is exactly an order
ideal.

This is the classical maximum-closure/minimum-cut idea: optimizing over closed
sets of a directed graph can be reduced to an $S$--$T$ cut by using large
capacity arcs for closure constraints and source/sink arcs for node weights
\cite{Picard1976}.  In the present paper, the closed sets are precisely the
threshold ideals corresponding to feasible nondecreasing vectors.

\subsection{Two elementary cuts in a weighted directed graph}

Before returning to threshold ideals, it is useful to distinguish an arbitrary
$S$--$T$ cut from a minimum $S$--$T$ cut in a plain weighted directed graph.
Figure~\ref{fig:two-st-cuts} shows the same graph with two different cuts.  A
cut is evaluated by summing the capacities of the arcs that leave the source
side and enter the sink side.  The first cut is feasible but not optimal: it
pays $4+2+5=11$.  The second cut separates the source immediately and pays only
$3+2=5$, and is the minimum cut for this graph.  In the threshold construction
used in this paper, the graph is built so that cuts of the second kind are not
chosen merely because they are cheap: large-capacity closure arcs force the
source side to be an order ideal, and the remaining finite arcs encode the
objective value.

\begin{figure}[htbp]
\centering
\begin{minipage}[t]{0.49\textwidth}
\centering
\begin{tikzpicture}[
    >=Latex,
    scale=0.9,
    every node/.style={font=\scriptsize},
    v/.style={draw,circle,minimum size=7mm,inner sep=0pt,fill=white},
    sourcev/.style={draw,circle,minimum size=7mm,inner sep=0pt,fill=blue!15},
    sinkv/.style={draw,circle,minimum size=7mm,inner sep=0pt,fill=green!12},
    edge/.style={->,thin},
    cutedge/.style={->,very thick,red!75!black},
    lab/.style={fill=white,inner sep=1pt}
]
\coordinate (cut) at (2.18,0);
\node[sourcev] (s) at (0,0) {$s$};
\node[sourcev] (a) at (1.45,1.05) {$a$};
\node[sourcev] (b) at (1.45,-1.05) {$b$};
\node[sinkv] (c) at (3.10,0.25) {$c$};
\node[sinkv] (t) at (4.65,0) {$t$};
\begin{scope}[on background layer]
  \fill[blue!6,rounded corners=3pt] (-0.55,-1.55) rectangle (2.18,1.55);
  \fill[green!6,rounded corners=3pt] (2.18,-1.55) rectangle (5.10,1.55);
\end{scope}
\draw[dashed,thick] (2.18,-1.65) -- (2.18,1.65);
\draw[edge] (s) -- node[lab,above left] {$3$} (a);
\draw[edge] (s) -- node[lab,below left] {$2$} (b);
\draw[edge] (a) -- node[lab,right] {$1$} (b);
\draw[edge] (c) -- node[lab,above] {$3$} (t);
\draw[cutedge] (a) -- node[lab,above] {$4$} (c);
\draw[cutedge] (b) -- node[lab,above] {$2$} (c);
\draw[cutedge] (b) -- node[lab,below] {$5$} (t);
\node[blue!60!black] at (0.72,-1.95) {$S=\{s,a,b\}$};
\node[green!50!black] at (3.65,-1.95) {$T=\{c,t\}$};
\node[align=center] at (2.32,1.95) {a cut of value\\$4+2+5=11$};
\end{tikzpicture}
\end{minipage}\hfill
\begin{minipage}[t]{0.49\textwidth}
\centering
\begin{tikzpicture}[
    >=Latex,
    scale=0.9,
    every node/.style={font=\scriptsize},
    v/.style={draw,circle,minimum size=7mm,inner sep=0pt,fill=white},
    sourcev/.style={draw,circle,minimum size=7mm,inner sep=0pt,fill=blue!15},
    sinkv/.style={draw,circle,minimum size=7mm,inner sep=0pt,fill=green!12},
    edge/.style={->,thin},
    cutedge/.style={->,very thick,red!75!black},
    lab/.style={fill=white,inner sep=1pt}
]
\node[sourcev] (s) at (0,0) {$s$};
\node[sinkv] (a) at (1.45,1.05) {$a$};
\node[sinkv] (b) at (1.45,-1.05) {$b$};
\node[sinkv] (c) at (3.10,0.25) {$c$};
\node[sinkv] (t) at (4.65,0) {$t$};
\begin{scope}[on background layer]
  \fill[blue!6,rounded corners=3pt] (-0.55,-1.55) rectangle (0.72,1.55);
  \fill[green!6,rounded corners=3pt] (0.72,-1.55) rectangle (5.10,1.55);
\end{scope}
\draw[dashed,thick] (0.72,-1.65) -- (0.72,1.65);
\draw[cutedge] (s) -- node[lab,above left] {$3$} (a);
\draw[cutedge] (s) -- node[lab,below left] {$2$} (b);
\draw[edge] (a) -- node[lab,right] {$1$} (b);
\draw[edge] (a) -- node[lab,above] {$4$} (c);
\draw[edge] (b) -- node[lab,above] {$2$} (c);
\draw[edge] (b) -- node[lab,below] {$5$} (t);
\draw[edge] (c) -- node[lab,above] {$3$} (t);
\node[blue!60!black] at (0.05,-1.95) {$S=\{s\}$};
\node[green!50!black] at (3.10,-1.95) {$T=\{a,b,c,t\}$};
\node[align=center] at (2.32,1.95) {minimum cut value\\$3+2=5$};
\end{tikzpicture}
\end{minipage}
\caption{Two cuts of the same weighted directed graph.  Red arcs are precisely
those leaving the source side and entering the sink side, hence they contribute
to the cut value.  The left partition is only a cut, while the right partition
is a minimum cut.}
\label{fig:two-st-cuts}
\end{figure}

\subsection{Min-cut examples for \texorpdfstring{$k$}{k}-subset selection}

The previous figure showed how an $S$--$T$ cut is evaluated in an ordinary
weighted graph.  We now use the same cut viewpoint for the threshold graph that
appears in the Riesz-energy algorithm.  Take $k=2$ and $m=n-k=2$.  Then
there are four threshold variables,
\[
        z_{1,1},\ z_{1,2},\ z_{2,1},\ z_{2,2},
\]
corresponding to the four threshold nodes $(1,1),(1,2),(2,1),(2,2)$.  The
closure implications are
\[
        z_{1,2}\Rightarrow z_{1,1},\qquad
        z_{2,2}\Rightarrow z_{2,1},\qquad
        z_{1,1}\Rightarrow z_{2,1},\qquad
        z_{1,2}\Rightarrow z_{2,2}.
\]
Figure~\ref{fig:tiny-cut-example} shows the corresponding closure graph.  The
shaded source-side nodes represent the feasible vector $y=(1,2)$, because the
selected thresholds are exactly
\[
        I(y)=\{(1,1),(2,1),(2,2)\}.
\]
In the right panel, the source side of the cut is indicated explicitly by a
blue dashed stair-shaped region containing exactly $S$, $z_{1,1}$, $z_{2,1}$, and $z_{2,2}$;
all remaining nodes lie on the sink side.  The red arc is the only finite arc
crossing from the source side to the sink side in this illustrative cut.  In
particular, the arc $S\to z_{2,2}$ of capacity $2$ is not cut, because both of
its endpoints lie on the source side.  A directed cut counts only arcs from the
source side to the sink side.  Thus, among the finite arcs drawn in the right
panel, the displayed cut contributes only the capacity of $z_{2,1}\to T$.

The selected subset is recovered from the row counts of the selected threshold
nodes.  Since $I(y)$ contains one node in row $1$ and two nodes in row $2$, we
get $y=(1,2)$.  The original selected indices are $i_r=r+y_r$, hence
\[
        (i_1,i_2)=(1+1,2+2)=(2,4).
\]
Thus this cut encodes the subset $\{x_2,x_4\}$.  This tiny picture should be
read as an encoding example rather than as a full numerical Riesz instance: the
finite capacities shown are representative source/sink and pairwise arcs from
the graph-cut construction.

This set is an order ideal: whenever a node is selected, all of its
prerequisites are selected as well.  Equivalently, in the cut graph every
large-capacity closure arc starts and ends on the source side or the sink side,
so none of them crosses the cut.  By contrast, if one tried to put $z_{1,1}=1$
on the source side but $z_{2,1}=0$ on the sink side, then the large-capacity arc
$z_{1,1}\to z_{2,1}$ would be cut, making such a partition nonoptimal.

\begin{figure}[htbp]
\centering
\begin{minipage}[t]{0.42\textwidth}
\centering
\begin{tikzpicture}[
    >=Latex,
    node distance=11mm and 11mm,
    every node/.style={font=\small},
    pnode/.style={draw,circle,inner sep=1.5pt,minimum size=7mm},
    chosen/.style={draw,circle,inner sep=1.5pt,minimum size=7mm,fill=blue!15}
]
\node[chosen] (a) at (0,0) {$(1,1)$};
\node[pnode]  (b) at (0,1.5) {$(1,2)$};
\node[chosen] (c) at (2.4,0) {$(2,1)$};
\node[chosen] (d) at (2.4,1.5) {$(2,2)$};
\draw[->,thick] (b) -- (a);
\draw[->,thick] (d) -- (c);
\draw[->,thick] (a) -- (c);
\draw[->,thick] (b) -- (d);
\node[align=center] at (1.2,-1.0) {Threshold poset and ideal\\for $y=(1,2)$};
\end{tikzpicture}
\end{minipage}\hfill
\begin{minipage}[t]{0.54\textwidth}
\centering
\begin{tikzpicture}[
    >=Latex,
    node distance=9mm and 12mm,
    every node/.style={font=\small},
    var/.style={draw,circle,inner sep=1.5pt,minimum size=7mm},
    chosen/.style={draw,circle,inner sep=1.5pt,minimum size=7mm,fill=blue!15},
    term/.style={font=\scriptsize,fill=white,inner sep=1pt}
]
\node[draw,rectangle,rounded corners,minimum width=7mm,minimum height=7mm,fill=blue!10] (S) at (-1.6,0.75) {$S$};
\node[draw,rectangle,rounded corners,minimum width=7mm,minimum height=7mm] (T) at (5.2,0.75) {$T$};

\node[chosen] (v11) at (0.3,0) {$z_{1,1}$};
\node[var]    (v12) at (0.3,1.5) {$z_{1,2}$};
\node[chosen] (v21) at (2.6,0) {$z_{2,1}$};
\node[chosen] (v22) at (2.6,1.5) {$z_{2,2}$};

\draw[->] (S) -- node[term,pos=0.42,above left=-1pt and -1pt] {$1$} (v11);
\draw[->] (S) -- node[term,pos=0.58,above left=-1pt and -1pt] {$2$} (v22);
\draw[->] (v12) -- node[term,pos=0.62,above] {$1$} (T);
\draw[->,red!75!black,very thick] (v21) -- node[term,pos=0.48,below,text=red!75!black] {$1$} (T);

\draw[->,thick] (v12) -- node[term,left] {$U$} (v11);
\draw[->,thick] (v22) -- node[term,right] {$U$} (v21);
\draw[->,thick] (v11) -- node[term,below] {$U$} (v21);
\draw[->,thick] (v12) -- node[term,above] {$U$} (v22);

\draw[->,dashed] (v11) to[bend left=22] node[term,pos=0.53,above=3pt] {$w$} (v22);

\begin{scope}[on background layer]
    \path[draw=blue!60!black,dashed,very thick,rounded corners=8pt,fill=blue!6]
        (-2.20,1.25) -- (-1.05,1.25) -- (-1.05,1.05) --
        (1.95,1.05) -- (1.95,1.95) -- (3.15,1.95) --
        (3.15,-0.50) -- (-0.25,-0.50) -- (-0.25,0.35) --
        (-2.20,0.35) -- cycle;
\end{scope}
\node[blue!60!black,font=\scriptsize,anchor=south west] at (-2.17,1.34) {source side of cut};
\node[red!75!black,font=\scriptsize,anchor=west] at ($(v21)!0.54!(T)+(0.12,-0.50)$) {cut edge};
\draw[red!75!black,->] ($(v21)!0.57!(T)+(0.25,-0.34)$) -- ($(v21)!0.66!(T)$);
\node[align=center] at (1.9,-1.15) {Example cut: source side encodes\\$I(y)=\{(1,1),(2,1),(2,2)\}$, $y=(1,2)$\\and subset $\{x_2,x_4\}$};
\end{tikzpicture}
\end{minipage}
\caption{A tiny illustrative example with $k=2$ and $m=2$.  Left: the
threshold poset and the order ideal corresponding to $y=(1,2)$.  Right: the
associated cut graph.  The blue dashed stair-shaped region is the source side of the cut and
therefore encodes $I(y)=\{(1,1),(2,1),(2,2)\}$, equivalently $y=(1,2)$ and the
selected subset $\{x_2,x_4\}$.  Thick arcs of capacity $U$ enforce closure;
sample finite source/sink arcs and one dashed finite pairwise arc illustrate how
the objective is encoded.  The arc labeled $w$ denotes a generic pairwise
objective capacity derived from the threshold expansion of a Riesz interaction;
it is not itself the Riesz weight $|x_4-x_2|^{-s}$ and it is not cut in the
shown partition.}
\label{fig:tiny-cut-example}
\end{figure}

A slightly larger example can be useful because the cut then runs visibly through
an interior level of the threshold grid rather than almost around all threshold
nodes.  Take $k=3$ and $m=3$, and consider the feasible vector

\[
        y=(2,2,2).
\]
The corresponding source-side ideal is
\[
        I(y)=\{(r,q): r=1,2,3,\ q=1,2\}.
\]
Equivalently, the source side contains the first two threshold levels in each
row and excludes the third threshold level.  The selected indices are
\[
        (i_1,i_2,i_3)=(1+2,2+2,3+2)=(3,4,5),
\]
so this cut encodes the subset $\{x_3,x_4,x_5\}$.  Figure~\ref{fig:middle-cut-example}
shows this as a cut through the middle of the threshold grid.  As before, the
large $U$-arcs enforce closure and are not cut by the displayed order ideal.
The red finite arcs are illustrative objective arcs crossing from the source
side to the sink side.  The dashed red arc with capacity $w_{\mathrm{mid}}$
represents one possible pairwise objective arc obtained from the threshold
expansion; it is a graph-cut capacity associated with a mixed term and is not a
direct Riesz weight between two selected points.

\begin{figure}[htbp]
\centering
\begin{tikzpicture}[
    >=Latex,scale=1.3,
    every node/.style={font=\small},
    var/.style={draw,circle,inner sep=1.2pt,minimum size=7.5mm},
    chosen/.style={draw,circle,inner sep=1.2pt,minimum size=7.5mm,fill=blue!15},
    term/.style={font=\scriptsize,fill=white,inner sep=1pt},
    uarc/.style={->,thick,gray!65},
    cutfinite/.style={->,red!75!black,very thick},
    paircut/.style={->,red!75!black,very thick,dashed}
]
\node[draw,rectangle,rounded corners,minimum width=8mm,minimum height=8mm,fill=blue!10] (S) at (-1.7,0.95) {$S$};
\node[draw,rectangle,rounded corners,minimum width=8mm,minimum height=8mm] (T) at (6.0,0.95) {$T$};

\node[chosen] (z11) at (0,0) {$z_{1,1}$};
\node[chosen] (z12) at (0,1.35) {$z_{1,2}$};
\node[var]    (z13) at (0,2.70) {$z_{1,3}$};
\node[chosen] (z21) at (2.0,0) {$z_{2,1}$};
\node[chosen] (z22) at (2.0,1.35) {$z_{2,2}$};
\node[var]    (z23) at (2.0,2.70) {$z_{2,3}$};
\node[chosen] (z31) at (4.0,0) {$z_{3,1}$};
\node[chosen] (z32) at (4.0,1.35) {$z_{3,2}$};
\node[var]    (z33) at (4.0,2.70) {$z_{3,3}$};

\foreach \a/\b in {z12/z11,z13/z12,z22/z21,z23/z22,z32/z31,z33/z32,
                   z11/z21,z21/z31,z12/z22,z22/z32,z13/z23,z23/z33}{
  \draw[uarc] (\a) -- (\b);
}

\draw[->] (S) to[bend right=12] node[term,pos=0.42,below left=1pt and -1pt] {$1$} (z11);
\draw[->] (S) -- node[term,pos=0.52,below left=1pt and -1pt] {$1$} (z12);
\draw[cutfinite] (S) to[bend left=13] node[term,pos=0.43,above left=1pt and -2pt,text=red!75!black] {$2$} (z13);
\draw[->] (z33) -- node[term,pos=0.55,above] {$2$} (T);
\draw[cutfinite] (z31) to[bend left=6] node[term,pos=0.50,below=1pt,text=red!75!black] {$2$} (T);
\draw[cutfinite] (z32) to[bend right=10] node[term,pos=0.45,below right=-1pt and 0pt,text=red!75!black] {$1$} (T);
\draw[paircut] (z12) to[bend left=48] node[term,pos=0.32,below=6pt,text=red!75!black] {$w_{\mathrm{mid}}$} (z33);

\begin{scope}[on background layer]
    \node[draw=blue!60!black,dashed,very thick,rounded corners=10pt,
          fill=blue!6,fit=(S)(z11)(z12)(z21)(z22)(z31)(z32),inner xsep=8pt,inner ysep=8pt] (midfit) {};
\end{scope}
\node[blue!60!black,font=\scriptsize,anchor=south west]
      at ($(midfit.north west)+(0.03,0.11)$) {source side: $I(y)$};

\node[font=\scriptsize,gray!80,anchor=east] at (-0.52,0) {$q=1$};
\node[font=\scriptsize,gray!80,anchor=east] at (-0.52,1.35) {$q=2$};
\node[font=\scriptsize,gray!80,anchor=east] at (-0.52,2.70) {$q=3$};
\node[font=\scriptsize,gray!70,anchor=west] at (5.30,2.60) {gray arcs: $U$-closure};
\node[font=\scriptsize,red!75!black,anchor=west] at (5.30,2.26) {red arcs: finite arcs cut};
\node[font=\scriptsize,black,anchor=west] at (5.30,1.92) {black arcs: finite arcs not cut};
\node[align=center] at (2.1,-1.55) {Middle cut: $I(y)=\{(r,q):q\le 2\}$, $y=(2,2,2)$,\\selected subset $\{x_3,x_4,x_5\}$};
\end{tikzpicture}
\caption{A second illustrative cut for $k=3$ and $m=3$.  The blue dashed region
is the source side of the cut.  It contains all threshold nodes with
$q\leq 2$, hence encodes $y=(2,2,2)$ and the selected subset
$\{x_3,x_4,x_5\}$.  All finite (non-$U$) capacities in this illustrative graph
are shown explicitly.  The gray $U$-arcs encode closure constraints and are not
cut in the source-to-sink direction.  Red finite arcs cross the cut and
contribute to the cut value; black finite arcs do not cross the cut.  The
symbolic capacity $w_{\mathrm{mid}}$ denotes a finite pairwise graph-cut
capacity arising from the threshold expansion, not a direct Riesz interaction
weight.}
\label{fig:middle-cut-example}
\end{figure}

\subsection{How the energy is encoded by finite cut arcs}

The remaining task is to make the cut capacity equal to the Riesz objective, up
to an additive constant.  Unary threshold terms are represented by source/sink
arcs.  A positive term $c z_v$ is represented by an arc $v\to T$ of capacity
$c$, and a negative term $c z_v$ is represented as
\[
        c z_v=c+(-c)(1-z_v),
\]
which gives a constant $c$ and an arc $S\to v$ of capacity $-c$.

For a pairwise mixed term, the triangular expansion gives terms of the form
\[
        -w z_a z_b,
        \qquad w\geq 0.
\]
Use the identity
\[
        -w z_a z_b=-w z_a+w z_a(1-z_b).
\]
The first term is unary.  The second term is represented by an arc $a\to b$ of
capacity $w$, because it contributes exactly when $z_a=1$ and $z_b=0$.  Thus
each nonpositive mixed coefficient becomes one finite directed arc plus one
unary correction.  This is the standard graph-cut representation of submodular
quadratic pseudo-Boolean terms \cite{KolmogorovZabih2004}.

Putting the two ingredients together gives the algorithmic meaning of the
report's lattice reduction:
\[
\begin{array}{c}
\text{feasible index vectors} \\
\Updownarrow \\
\text{order ideals of the threshold poset} \\
\Updownarrow \\
\text{finite closed source sides of cuts} \\
\Updownarrow \\
\text{cuts whose capacities equal the Riesz energy plus a constant.}
\end{array}
\]
Therefore, one minimum cut simultaneously enforces feasibility and minimizes the
energy. Using the $z$ values forms a lattice-structured matrix from which the optimal value can be read off after applying min-cut step and then separating source-connected nodes from the others.

\subsection*{Reproducibility note}

The numerical examples and the runtime-efficiency benchmark in this note can be
reproduced with the accompanying Python scripts available at
\url{https://github.com/emmerichmtm/Riesz1DPolytime}.

\subsection*{Declaration on the use of generative AI}
The author used OpenAI tools as a programming assistant, for
language improvement, and for cross-checking.  All ideas, expository parts
of the paper, and results are by the author and the acknowledged
contributors.  Full responsibility for the content of this paper, including
any remaining errors, lies with the author.

\subsection*{Acknowledgment}

I thank the author of the MathOverflow answer cited in \cite{SD2026} for suggesting the lattice/submodularity viewpoint.

\end{document}